\documentclass[aps,pra,twocolumn,10pt,superscriptaddress,nofootinbib]{revtex4-2}

\usepackage[T1]{fontenc}
\usepackage[utf8]{inputenc}
\usepackage{lmodern}
\usepackage{amsmath,amssymb,amsfonts,amsthm,mathtools}
\usepackage{bm}
\usepackage{physics}
\usepackage{graphicx}
\usepackage{xcolor}
\usepackage{hyperref}
\usepackage{mathrsfs}
\usepackage{enumitem}
\usepackage{booktabs}

\hypersetup{
  colorlinks=true,
  linkcolor=blue,
  citecolor=blue,
  urlcolor=blue
}

\newcommand{\I}{\mathbb{I}}
\renewcommand{\Tr}{\operatorname{Tr}}
\newcommand{\diag}{\operatorname{diag}}
\newcommand{\HS}{\mathrm{HS}}

\newcommand{\hbin}{h_2}
\newcommand{\x}{\mathrm{x}}

\newcommand{\z}{\mathrm{z}}

\newtheorem{lemma}{Lemma}

\theoremstyle{remark}
\newtheorem{remark}{Remark}

\begin{document}

\title{A geometric Fano--Procrustes framework for purification-based distances and quantum channels analysis}

\author{Tristán M. Osán}
\email{tristan.osan@unc.edu.ar}
\affiliation{Instituto de F\'{i}sica Enrique Gaviola (IFEG-UNC-CONICET), Universidad Nacional de Córdoba, 
Av. Medina Allende s/n, Ciudad Universitaria. Córdoba (X5000HUA). Argentina.}
\affiliation{Universidad Nacional de Córdoba. Facultad de Matemática, Astronomía, Física y Computación. Grupo de teoría de la materia condensada. Av. Medina Allende s/n, Ciudad Universitaria. Córdoba (X5000HUA). Argentina.}

\date{\today}

\begin{abstract}
In this work we reformulate the Uhlmann 
purification-overlap optimization and develop a purification-based
geometric framework for the analysis of mixed qubit states and qubit
channels. Using the Fano representation of two-qubit pure states,
a purification is described in terms of the Bloch vector of the system,
the ancilla Bloch vector, and a real correlation matrix. For a fixed
one-qubit mixed state, the freedom in the choice of purification can 
be parametrized by proper rotations acting on the ancillary
degrees of freedom. As a result, the optimization over purifications
entering the definition of the metric \(D_N\) introduced in
Ref.~\cite{Lamberti2009} is reduced to an orthogonal Procrustes problem
on the Lie group \(SO(3)\). This reduction yields not only the maximal
purification overlap, but also the optimal rotation relating the
purification frames. From this rotation we define a purification
misalignment angle \(\Theta\), which provides geometric information not
contained in scalar fidelity-based distinguishability measures.

The formalism is applied to representative qubit channels, including
depolarizing, bit-flip, phase-flip, amplitude-damping channels, and an imperfect
quantum NOT gate. For symmetry-adapted evolutions preserving the
Bloch-vector direction, the optimal rotation is trivial and
\(\Theta=0\), whereas noncollinear channel actions generate a nonzero
misalignment. The pair \((D_N,\Theta)\) therefore separates the
magnitude of the maximal purification overlap from the geometric
reorientation of the optimal purification frames. Since the optimal
Procrustes rotation can be lifted to a local unitary acting on the
ancilla, the construction also provides an operational interpretation
of the optimal purification in terms of an ancilla-side transformation.
\end{abstract}

\maketitle

\section{Introduction}

The characterization of quantum states and their transformations is a
central problem in quantum information theory. In particular, the
quantification of distinguishability between mixed states plays a
fundamental role in quantum communication, quantum control, and the
analysis of noisy quantum processes~\cite{Nielsen2000,Schumacher2010,Bengtsson2017,Wilde2017,Watrous2018,DAlessandro2021,Wiseman2010}.
Among the various approaches to
this problem, purification-based constructions occupy a prominent
position, since they connect geometric notions of distance with
operationally meaningful quantities such as transition probabilities,
maximal overlaps, and fidelities~\cite{Uhlmann1976,Jozsa1994,Nielsen2000,Bengtsson2017}.

A natural construction of this type is obtained by extending a distance
defined on pure states to mixed states through purification~\cite{Lamberti2009,Osan2013}. Given a
density operator \(\rho\), one considers all possible purifications in
an enlarged Hilbert space and defines the distance between two mixed
states through an optimization over such purifications. For
overlap-based distances, this optimization is equivalent, by Uhlmann's
theorem~\cite{Uhlmann1976,Jozsa1994,Nielsen2000,Watrous2018}, to maximizing the overlap between purifications, leading to
expressions involving the Uhlmann fidelity. This result provides a
compact and widely used scalar characterization of mixed-state
distinguishability. However, it does not, by itself, reveal the
structure of the optimizing purifications or the geometric mechanism by
which the optimal overlap is attained.

This distinction is important. Any distance or metric that is a
function of the Uhlmann fidelity, or equivalently of the maximal
purification overlap, ultimately depends only on the scalar value of
the optimization. Examples include the Bures distance, the Bures angle, the root infidelity~\cite{Bures1969,Nielsen2000,Bengtsson2017},
and the entropic metric considered in this work~\cite{Lamberti2009,Osan2013}. These
quantities differ in the particular scalar function applied to the
maximal overlap, but they do not retain information about the optimizer
itself. Thus, a fidelity-based distance answers how distinguishable two
states are, but not how the corresponding purifications must be aligned
in order to realize the optimal overlap.

The purpose of the present work is to develop a framework that keeps
both pieces of information. We focus on the qubit case, where additional
geometric structure can be exploited. Every mixed qubit state admits a
Bloch-vector representation, and bipartite two-qubit states can be
written in Fano form in terms of local Bloch vectors and a real
correlation matrix. This representation provides a natural bridge
between the algebraic description of quantum states and a geometric
picture based on three-dimensional vectors and rotations.

We show that the maximization of the overlap between qubit
purifications can be reduced to an orthogonal Procrustes problem on the
rotation group \(SO(3)\). In this formulation, the optimization over
purifications is replaced by an optimization over proper rotations,
which can be solved analytically using the singular value decomposition
of a real matrix constructed from the Fano parameters of the two states~\cite{Horn1986,Constantinescu2002}. The
maximal value obtained from this Procrustes problem reproduces the same
overlap that appears in Uhlmann's theorem~\cite{Uhlmann1976,Jozsa1994,Nielsen2000,Watrous2018}, and therefore may be used to
evaluate any fidelity-based scalar distance. The advantage of the
Fano--Procrustes formulation is that it also identifies the optimizing
rotation \(S_\star\).

This observation is the central point of the present approach. The
Fano--Procrustes construction may be used with any distinguishability
measure whose input is the Uhlmann fidelity or the maximal purification
overlap \(g_\star\). Once the Procrustes optimization yields \(g_\star\), one may
evaluate, for example, a Bures-type distance, or
the entropic metric $D_N$ introduced in Ref.~\cite{Lamberti2009}, by applying the corresponding scalar
function to \(g_\star\). At the same time, the optimizer \(S_\star\)
contains geometric information that is discarded by purely
fidelity-based treatments. Beyond the determination of the scalar distance, the optimal
Procrustes rotation naturally leads to an additional geometric
quantity, i.e., the purification misalignment angle \(\Theta\). This angle is
defined from \(S_\star\) and encodes the relative
orientation of the optimal purification frames within a fixed
canonical gauge. While the scalar distance measures the amount of
distinguishability between two states, \(\Theta\) provides
complementary information about how the optimal purifications are
geometrically related.

To illustrate the framework, we analyze several quantum channels acting
on qubits, including depolarizing, phase-flip, bit-flip, and amplitude
damping channels, as well as a model of an imperfect quantum NOT gate.

The paper is organized as follows. In Sec.~\ref{sec:purification_concept}, we review the concept of purification for mixed quantum states.
In Sec.~\ref{sec:bassi_operational_background}, we introduce the scheme proposed in \cite{Bassi2003}, in which the purification process is given an operational interpretation in terms of a physical reference system coupled to the system of interest. In Sec.~\ref{sec:uhlmann_fidelity}, we introduce the essential aspects of the Uhlmann--Jozsa fidelity that are directly relevant to the results presented in this work. 
In Sec.~\ref{sec:DN_metric} we introduce the purification-based entropic metric $D_N$ employed in this work and discuss its relation to the
maximal purification overlap. In Sec.~\ref{sec:kraus_decomposition} we briefly introduce the
operator-sum representation for quantum processes. In Sec.~\ref{sec:quantum_channels} we briefly describe representative
noisy qubit channels, such as the depolarizing, bit-flip,
phase-flip, and amplitude damping channels, which provide examples of
radial, anisotropic, and nonunital dynamics on the Bloch sphere. We
also introduce an imperfect quantum NOT gate modeled as a
random-unitary channel describing a pulse-angle control error. In Sec.~\ref{sec:fano_procrustes} we present the Fano representation of
qubit purifications and derive the algebraic constraints imposed by
purity. In Sec.~\ref{sec:procrustes_derivation} we show how the overlap
optimization reduces to an orthogonal Procrustes problem. In
Sec.~\ref{sec:Procrustes_solution} we introduce a canonical gauge and
derive the explicit Procrustes matrix in terms of Bloch-vector parameters. In
Sec.~\ref{sec:Theta_angle} we define the purification misalignment angle
\(\Theta\). In Sec.~\ref{sec:theta_anisotropy}, we discuss the effect of on the misalignment angle
\(\Theta\). In Sec.~\ref{sec:analytical_results} we provide closed-form expressions
for the purification overlap and the associated metric \(D_N\) for
several important classes of quantum channels. The analysis focuses on
families of states adapted to the symmetry of the channel, so that the
Bloch vectors of the input and output states remain collinear.
In Sec.~\ref{sec:numerical} we explore the behavior of the metric $D_N$ and the optimal purification alignment
$\Theta$ for generic qubit states and channels.
In Sec.~\ref{sec:analysis_discussion} we analyze the results presented
in this work using the Fano--Procrustes construction and also introduce
an operational realization of the optimal purification.
Finally, in Sec.~\ref{concluding_remarks} we summarize the results and discuss possible extensions.

\section{Purification of mixed quantum states}
\label{sec:purification_concept}

Purification is a fundamental procedure that associates a \textit{mixed}
quantum state with a \textit{pure} state defined on an enlarged Hilbert
space. Specifically, given a density operator $\rho$ acting on a Hilbert
space $\mathcal{H}$, one can always construct a pure state on an extended
space $\mathcal{H} \otimes \mathcal{H}_{\rm aux}$ such that tracing out
the auxiliary space $\mathcal{H}_{\rm aux}$ reproduces $\rho$~\cite{Jozsa1994,Nielsen2000,Bengtsson2017}.

To illustrate the construction, let

\[
\rho = \sum_k p_k \ket{\psi_k}\bra{\psi_k}
\]

\noindent be a spectral decomposition of $\rho$. Introduce an auxiliary Hilbert space
$\mathcal{H}_{\rm aux}$ with dimension at least equal to the rank of $\rho$,
and choose an orthonormal basis $\{\ket{\phi_k}\}$ of $\mathcal{H}_{\rm aux}$.
Define the normalized state

\begin{equation*}
\ket{\Psi} = \sum_k \sqrt{p_k}\, \ket{\psi_k} \otimes \ket{\phi_k}.
\end{equation*}

Let $\rho_\Psi = \ketbra{\Psi}{\Psi}$ denote the density operator of this
pure state on $\mathcal{H} \otimes \mathcal{H}_{\rm aux}$. Taking the partial
trace over the auxiliary system yields

\begin{equation*}
\Tr_{\rm aux}(\rho_\Psi)
=
\sum_k p_k \ket{\psi_k}\bra{\psi_k}
=
\rho,
\end{equation*}

\noindent where $\Tr_{\rm aux}$ denotes the partial trace over $\mathcal{H}_{\rm aux}$.

Thus, any mixed state can be purified by embedding it into an extended
Hilbert space~\cite{Jozsa1994,Nielsen2000}. The purification of a mixed state is
not unique: a given density operator admits infinitely many purifications,
related by unitary transformations acting on the auxiliary space.
Moreover, purification is not merely a formal construction but has direct
physical meaning; for instance, Ref.~\cite{Bassi2003} discusses physical
realizations of purifications associated with specific ensembles.

In particular, all purifications of a given state are related by unitary
transformations acting solely on the auxiliary Hilbert space.

\section{Operational background: ensemble purification}
\label{sec:bassi_operational_background}

Purification is often introduced as a formal way of representing a mixed
state as the reduced state of a pure state in a larger Hilbert space.
However, Bassi and Ghirardi~\cite{Bassi2003} emphasized that
purification may also be understood operationally, in terms of a physical
reference system coupled to the system of interest.

Let

\[
\rho=\sum_i d_i\ket{\phi_i}\bra{\phi_i}
\]

be the spectral decomposition of a density operator. Introducing an
ancillary system \(K\) with an orthonormal set
\(\{\ket{a_i}\}\), one obtains the purification

\begin{equation}
\ket{\Psi}
=
\sum_i \sqrt{d_i}\,
\ket{\phi_i}\otimes\ket{a_i},
\label{eq:bassi_purification_background}
\end{equation}

\noindent which satisfies

\begin{equation*}
\Tr_K\ket{\Psi}\bra{\Psi}=\rho.
\end{equation*}

The key operational point is that different measurements on the
ancillary system generate different ensemble decompositions of the same
density operator. Indeed, if another orthonormal basis
\(\{\ket{b_j}\}\) of the ancilla is related to \(\{\ket{a_i}\}\) by a
unitary transformation,

\begin{equation*}
\ket{a_i}=\sum_j U_{ij}\ket{b_j},
\end{equation*}

\noindent then Eq.~\eqref{eq:bassi_purification_background} can be rewritten as

\begin{equation*}
\ket{\Psi}
=
\sum_j
\ket{\widetilde\chi_j}\otimes\ket{b_j},
\qquad
\ket{\widetilde\chi_j}
=
\sum_i \sqrt{d_i}\,U_{ij}\ket{\phi_i}.
\end{equation*}

A measurement on the ancilla in the basis \(\{\ket{b_j}\}\) prepares the
system in the normalized states

\[
\ket{\chi_j}
=
\frac{\ket{\widetilde\chi_j}}
{\|\ket{\widetilde\chi_j}\|}
\]

with probabilities

\[
p_j=\|\ket{\widetilde\chi_j}\|^2.
\]

Thus, by changing the measurement basis of the reference system, one
obtains different ensembles

\[
\{p_j,\ket{\chi_j}\}
\]

which all represent the same density operator \(\rho\).

Bassi and Ghirardi further showed that this construction can be given a
physical realization. One may introduce an interaction between the
system and the reference system such that

\begin{equation*}
\ket{\phi_i}\otimes\ket{a_0}
\longmapsto
\ket{\phi_i}\otimes\ket{a_i},
\qquad
\braket{a_i}{a_j}=\delta_{ij}.
\end{equation*}

By linearity, this interaction transforms

\[
\sum_i\sqrt{d_i}\ket{\phi_i}\otimes\ket{a_0}
\]

\noindent into the purified state

\[
\sum_i\sqrt{d_i}\ket{\phi_i}\otimes\ket{a_i}.
\]

For qubits, this type of purification can be implemented using standard
controlled operations, such as a controlled-NOT gate in a suitable basis.

This operational viewpoint is relevant for the present work because it
shows that the ancillary degrees of freedom appearing in a purification
can be regarded as controllable physical degrees of freedom. In the
Fano--Procrustes framework developed below, this ancillary freedom is
represented geometrically by rotations acting on the Fano parameters of the
purification. The Procrustes optimization then selects, among these
ancilla-side transformations, the one that produces the purification
with maximal overlap. Thus, the operational purification scheme provides
a useful conceptual background for the later interpretation of the
optimal Procrustes rotation as a physically meaningful ancilla operation.

\section{Uhlmann-Jozsa fidelity}
\label{sec:uhlmann_fidelity}

Let \(\rho\) and \(\sigma\) be density operators acting on a finite-dimensional
Hilbert space \(\mathcal H\). The Uhlmann fidelity is defined as~\cite{Uhlmann1976,Jozsa1994}

\begin{equation*}
F(\rho,\sigma)
=
\left[
\Tr\sqrt{\sqrt{\rho}\,\sigma\,\sqrt{\rho}}
\right]^2 .
\end{equation*}

Equivalently, one may introduce the fidelity amplitude

\begin{equation*}
\mathcal F(\rho,\sigma)
=
\Tr\sqrt{\sqrt{\rho}\,\sigma\,\sqrt{\rho}},
\end{equation*}

\noindent so that

\begin{equation*}
F(\rho,\sigma)=\mathcal F(\rho,\sigma)^2 .
\end{equation*}

Uhlmann's theorem~\cite{Uhlmann1976,Jozsa1994,Nielsen2000,Watrous2018} states that the fidelity amplitude is equal to the
maximal overlap between purifications of the two states:

\begin{equation}
\mathcal F(\rho,\sigma)
=
\max_{\ket{\Psi_\rho},\,\ket{\Psi_\sigma}}
\left|
\braket{\Psi_\rho}{\Psi_\sigma}
\right|,
\label{eq:uhlmann_theorem}
\end{equation}

\noindent where \(\ket{\Psi_\rho}\) and \(\ket{\Psi_\sigma}\) range over all
purifications of \(\rho\) and \(\sigma\), respectively, in a common
extended Hilbert space.

It is also useful to note that the optimization in
Eq.~\eqref{eq:uhlmann_theorem} may be performed by fixing one
purification and optimizing only over the other \cite{Jozsa1994}.
Furthermore, it is important to emphasize that the Uhlmann fidelity itself is not a
distance. Rather, it is a similarity measure, i.e., it is symmetric, takes
values in the interval \([0,1]\), and satisfies \(F(\rho,\sigma)=1\)
when \(\rho=\sigma\), but it does not define a metric on the state
space. Nevertheless, several standard distances can be constructed as
functions of the fidelity. For example, the Bures
distance is defined as~\cite{Bengtsson2017}

\begin{equation*}
D_B(\rho,\sigma)
=
\sqrt{2\left(1-\sqrt{F(\rho,\sigma)}\right)},
\end{equation*}

\noindent while the Bures angle is given by~\cite{Nielsen2000,Bengtsson2017}

\begin{equation*}
D_{BA}(\rho,\sigma)
=
\arccos\sqrt{F(\rho,\sigma)}
\end{equation*}

Another fidelity-based distinguishability measure is the root
infidelity~\cite{Nielsen2000,Bengtsson2017},

\begin{equation*}
D_{\rm RI}(\rho,\sigma)
=
\sqrt{1-F(\rho,\sigma)}.
\end{equation*}

The entropic metric considered in this work is also a function
of the same maximal purification overlap and is defined as~\cite{Lamberti2009,Osan2013}

\begin{equation*}
D_N(\rho,\sigma)
=
\sqrt{
h_2\!\left(
\frac{1+\sqrt{F(\rho,\sigma)}}{2}
\right)
}.
\end{equation*}

Thus, different fidelity-based distances provide different scalar
quantifications of distinguishability.

\section{Purification-based entropic metric}
\label{sec:DN_metric}

For two pure states \(\ket{\psi}\) and \(\ket{\phi}\), the quantity introduced in Ref.~\cite{Lamberti2009} is

\begin{equation*}
D_N(\ket{\psi},\ket{\phi})
=
\sqrt{
H_N\!\left(
\frac{\ket{\psi}\!\bra{\psi}+\ket{\phi}\!\bra{\phi}}{2}
\right)
},
\end{equation*}

\noindent where \(H_N(\rho)=-\Tr(\rho\log_2\rho)\) is the von Neumann entropy.

If

\begin{equation*}
g \doteq |\braket{\psi}{\phi}| \in [0,1],
\end{equation*}

\noindent then

\begin{equation*}
D_N(\ket{\psi},\ket{\phi})
=
\sqrt{
\hbin\!\left(\frac{1+g}{2}\right)
},
\end{equation*}

\noindent where

\begin{equation*}
\hbin(x)\doteq -x\log_2 x -(1-x)\log_2(1-x)
\end{equation*}

\noindent is the binary entropy.

For mixed states \(\rho\) and \(\sigma\), the metric $D_N$ is defined by

\begin{equation*}
D_N(\rho,\sigma)
=
\min_{\ket{\Psi_\rho},\,\ket{\Psi_\sigma}}
D_N(\ket{\Psi_\rho},\ket{\Psi_\sigma}),
\end{equation*}

\noindent where \(\ket{\Psi_\rho}\) and \(\ket{\Psi_\sigma}\) are purifications of \(\rho\) and \(\sigma\), respectively.

Equivalently, the mixed-state problem reduces to maximizing the purification overlap,

\begin{equation*}
g_\star(\rho,\sigma)
=
\max_{\ket{\Psi_\rho},\,\ket{\Psi_\sigma}}
|\braket{\Psi_\rho}{\Psi_\sigma}|,
\end{equation*}

\noindent and then evaluating

\begin{equation*}
D_N(\rho,\sigma)
=
\sqrt{
\hbin\!\left(\frac{1+g_\star(\rho,\sigma)}{2}\right)
}.
\end{equation*}

It is worth mentioning that, by Uhlmann's theorem~\cite{Uhlmann1976,Jozsa1994,Nielsen2000,Watrous2018}, the maximal overlap is equal to the square root of the Uhlmann fidelity \cite{Jozsa1994},

\begin{equation*}
g_\star(\rho,\sigma)=\sqrt{F(\rho,\sigma)},
\end{equation*}

\noindent so that the metric \(D_N\) can also be written as

\begin{equation}
D_N(\rho,\sigma)
=
\sqrt{
\hbin\!\left(\frac{1+\sqrt{F(\rho,\sigma)}}{2}\right)
}.
\label{eq:DN_from_fidelity}
\end{equation}

However, although Eq.~\eqref{eq:DN_from_fidelity} expresses \(D_N(\rho,\sigma)\) directly in terms of the Uhlmann fidelity, the present work is not concerned solely with the scalar value of \(D_N\). Instead, we analyze the geometric structure of the optimizing purifications that attain the maximal overlap \(g_\star(\rho,\sigma)\). As we shall see in Secs.~\ref{sec:procrustes_derivation},~\ref{sec:Procrustes_solution}, and~\ref{sec:Theta_angle}, this leads to a geometric optimization problem on \(SO(3)\), whose solution encodes additional information beyond the maximal overlap \(g_\star(\rho,\sigma)\).

The generalized quantum Jensen--Shannon divergence of an ensemble
\(\mathcal{E}=\{q_i,\rho_i\}\),

\[
D_{\mathrm{JS}}(\mathcal{E})
=
S\!\left(\sum_i q_i\rho_i\right)
-
\sum_i q_i S(\rho_i),
\]

\noindent coincides with the Holevo quantity \(\chi(\mathcal{E})\)~\cite{Holevo1973,Holevo1998,Schumacher1997,Schumacher2010,Wilde2017,Roga2010}, and therefore
upper-bounds the accessible classical mutual information obtainable from
measurements on the ensemble. In the particular binary equiprobable case
\(\mathcal{E}=\{1/2,\rho;\,1/2,\sigma\}\), one has

\[
\chi(\mathcal{E})
=
D_{\mathrm{JS}}(\rho,\sigma).
\]

Moreover, the fidelity-based bound

\[
D_{\mathrm{JS}}(\rho,\sigma)
\leq
\hbin\!\left(\frac{1+\sqrt{F(\rho,\sigma)}}{2}\right)
=
D_N^2(\rho,\sigma)
\]

\noindent shows that \(D_N^2\) provides an upper bound on the Holevo quantity for
two equiprobable quantum states~\cite{Roga2010}. This bound is tight, since equality is
attained for pairs of pure states.

In this work, we adopt the metric $D_N$ because it provides a natural and meaningful scalar quantifier within the purification-based geometric framework developed here. Furthermore, $D_N$ is a bona fide metric that satisfies several appealing properties~\cite{Lamberti2009,Osan2013}.

\section{Operator-sum representation of quantum processes}
\label{sec:kraus_decomposition}
The most general deterministic evolution of a quantum state is described by a
linear map

\[
\Phi:\mathcal{D}(\mathcal{H})\longrightarrow \mathcal{D}(\mathcal{H}'),
\]

\noindent where \(\mathcal{D}(\mathcal{H})\) denotes the set of density operators on the
Hilbert space \(\mathcal{H}\). In order to represent a physically admissible
quantum evolution, \(\Phi\) is usually required to be completely positive
and trace preserving (CPTP). Complete positivity means that

\[
\mathcal{I}_n\otimes \Phi
\]

\noindent is positive for every ancillary dimension \(n\), where \(\mathcal{I}_n\) is the
identity map acting on the ancillary system. This requirement guarantees that
\(\Phi\) remains positive even when the system is part of a larger
entangled state. Trace preservation,

\[
\operatorname{Tr}\Phi(\rho)=\operatorname{Tr}\rho,
\]

expresses conservation of total probability.

Every CPTP map admits an operator-sum, or Kraus, representation of the form~\cite{Kraus1971,Kraus1983,Nielsen2000,Schumacher2010,Bengtsson2017}

\[
\Phi(\rho)
=
\sum_{\mu} K_\mu \rho K_\mu^\dagger ,
\]

\noindent where the operators

\[
K_\mu:\mathcal{H}\longrightarrow\mathcal{H}'
\]

\noindent are called Kraus operators. The trace-preserving condition is equivalent to

\[
\sum_\mu K_\mu^\dagger K_\mu = \I_{\mathcal{H}}.
\]

Conversely, any map written in this form with Kraus operators satisfying the
above completeness relation is completely positive and trace preserving.

A particularly important subclass is given by unitary channels,

\[
\Phi(\rho)=U\rho U^\dagger,
\]

\noindent which correspond to the case of a single Kraus operator \(K_1=U\). Unitary
channels describe the reversible dynamics of closed quantum systems. More
general Kraus representations describe irreversible processes such as noise,
decoherence, dissipation, and measurement without postselection.

The connection with open quantum systems arises naturally from unitary
evolution on a larger Hilbert space. Let \(S\) denote the system of interest
and \(E\) its environment. Suppose that the initial joint state is factorized,

\[
\rho_{SE}(0)=\rho_S(0)\otimes \omega_E,
\]

\noindent and that the composite system evolves unitarily according to

\[
\rho_{SE}(t)
=
U_{SE}(t)\,
\bigl(\rho_S(0)\otimes \omega_E\bigr)\,
U_{SE}^\dagger(t).
\]

The reduced state of the system at time \(t\) is obtained by tracing out the
environment:

\[
\rho_S(t)
=
\operatorname{Tr}_E\!\left[
U_{SE}(t)\,
\bigl(\rho_S(0)\otimes \omega_E\bigr)\,
U_{SE}^\dagger(t)
\right].
\]

This defines a quantum channel

\[
\rho_S(t)=\Phi_t(\rho_S(0)).
\]

To obtain the Kraus representation explicitly, let the environment state have
spectral decomposition

\[
\omega_E=\sum_\alpha p_\alpha |\alpha\rangle\langle\alpha|,
\]

\noindent and let \(\{|e_\beta\rangle\}_\beta\) be an orthonormal basis of the
environment. Then

\[
\Phi_t(\rho)
=
\sum_{\alpha,\beta}
K_{\beta\alpha}(t)\,\rho\,K_{\beta\alpha}^\dagger(t),
\]

\noindent with Kraus operators

\[
K_{\beta\alpha}(t)
=
\sqrt{p_\alpha}\,
\langle e_\beta|U_{SE}(t)|\alpha\rangle.
\]

Here the matrix element is taken only on the environmental Hilbert space, so
each \(K_{\beta\alpha}(t)\) acts on the system Hilbert space. The unitarity of
\(U_{SE}(t)\) implies

\[
\sum_{\alpha,\beta}
K_{\beta\alpha}^\dagger(t)K_{\beta\alpha}(t)
=
\I_S,
\]

\noindent and therefore \(\Phi_t\) is CPTP.

Thus, the Kraus representation provides the standard operational description
of open-system dynamics when the system and environment are initially
uncorrelated. The apparent irreversibility of the reduced dynamics arises from
discarding environmental degrees of freedom through the partial trace, even
though the joint system--environment evolution is unitary.

The Kraus representation provides the standard operational description
of open-system dynamics when the system and environment are initially
uncorrelated. The apparent irreversibility of the reduced dynamics arises from
discarding environmental degrees of freedom through the partial trace, even
though the joint system--environment evolution is unitary.

\section{Quantum channels}
\label{sec:quantum_channels}

\subsection{Pauli-type channels}
\label{sec:pauli}

\subsubsection{Depolarizing channel}
\label{sssec:depolarizing_channel}

The depolarizing channel describes isotropic noise that drives a qubit
toward the maximally mixed state. In the convention used here, the
channel is defined as~\cite{Nielsen2000}

\begin{equation*}
\Phi_p^{\mathrm{dep}}(\rho)
=
(1-p)\rho
+
p\,\frac{\I}{2},
\qquad
0\leq p\leq 1 .
\end{equation*}

Let the input state be written in Bloch form as~\cite{Nielsen2000}

\begin{equation*}
\rho(\mathbf r)
=
\frac{1}{2}
\left(
\I+\mathbf r\cdot\boldsymbol{\sigma}
\right),
\qquad
\mathbf r=(r_x,r_y,r_z).
\end{equation*}

The action on the Bloch vector $\mathbf r$ is simply

\begin{equation*}
\mathbf r
\longmapsto
\mathbf r'
=
(1-p)\mathbf r .
\end{equation*}

\subsubsection{Bit-flip channel}
\label{sssec:bit_flip_channel}

The bit-flip channel describes a stochastic error in which a Pauli
\(\sigma_x\) operation is applied to the qubit with probability \(p\).
It is defined by~\cite{Nielsen2000}

\begin{equation*}
\Phi_p^{\mathrm{BF}}(\rho)
=
(1-p)\rho+p\,\sigma_x\rho\sigma_x ,
\qquad 0\leq p\leq 1 .
\end{equation*}

Equivalently, it admits the Kraus representation

\begin{equation*}
K_0=\sqrt{1-p}\,\I,
\qquad
K_1=\sqrt{p}\,\sigma_x ,
\end{equation*}

\noindent which satisfies

\begin{equation*}
K_0^\dagger K_0+K_1^\dagger K_1=\I .
\end{equation*}

Let the input state be written as

\begin{equation*}
\rho(\mathbf r)
=
\frac{1}{2}
\left(
\I+\mathbf r\cdot\boldsymbol{\sigma}
\right),
\qquad
\mathbf r=(r_x,r_y,r_z).
\end{equation*}

\noindent Using

\begin{equation*}
\sigma_x\sigma_x\sigma_x=\sigma_x,
\qquad
\sigma_x\sigma_y\sigma_x=-\sigma_y,
\qquad
\sigma_x\sigma_z\sigma_x=-\sigma_z,
\end{equation*}

\noindent one obtains

\begin{equation*}
\Phi_p^{\mathrm{BF}}(\rho(\mathbf r))
=
\frac{1}{2}
\left[
\I
+
r_x\sigma_x
+
(1-2p)r_y\sigma_y
+
(1-2p)r_z\sigma_z
\right].
\end{equation*}

Thus, in the Bloch representation, the channel acts as

\begin{equation*}
\mathbf r=(r_x,r_y,r_z)
\longmapsto
\mathbf r'
=
\bigl(r_x,\lambda r_y,\lambda r_z\bigr),
\qquad
\lambda=1-2p .
\end{equation*}

\subsubsection{Phase-flip channel}
\label{sssec:phase_flip_channel}

The phase-flip channel describes a stochastic phase error in which a
Pauli \(\sigma_z\) operation is applied to the qubit with probability
\(p\). It is defined by~\cite{Nielsen2000}

\begin{equation*}
\Phi_p^{\mathrm{PF}}(\rho)
=
(1-p)\rho
+
p\,\sigma_z\rho\sigma_z,
\qquad
0\leq p\leq 1 .
\end{equation*}

A Kraus representation is given by

\begin{equation*}
K_0=\sqrt{1-p}\,\I,
\qquad
K_1=\sqrt{p}\,\sigma_z,
\end{equation*}

which satisfies

\begin{equation*}
K_0^\dagger K_0+K_1^\dagger K_1=\I .
\end{equation*}

Let the input state be written in Bloch form as

\begin{equation*}
\rho(\mathbf r)
=
\frac{1}{2}
\left(
\I+\mathbf r\cdot\boldsymbol{\sigma}
\right),
\qquad
\mathbf r=(r_x,r_y,r_z).
\end{equation*}

Using

\begin{equation*}
\sigma_z\sigma_x\sigma_z=-\sigma_x,
\qquad
\sigma_z\sigma_y\sigma_z=-\sigma_y,
\qquad
\sigma_z\sigma_z\sigma_z=\sigma_z,
\end{equation*}

\noindent one obtains

\begin{equation*}
\Phi_p^{\mathrm{PF}}(\rho(\mathbf r))
=
\frac{1}{2}
\left[
\I
+
(1-2p)r_x\sigma_x
+
(1-2p)r_y\sigma_y
+
r_z\sigma_z
\right].
\end{equation*}

Thus, in the Bloch representation, the phase-flip channel acts as

\begin{equation*}
\mathbf r=(r_x,r_y,r_z)
\longmapsto
\mathbf r'
=
\bigl(\lambda r_x,\lambda r_y,r_z\bigr),
\qquad
\lambda=1-2p .
\end{equation*}

\subsection{Amplitude damping channel}
\label{sssec:amplitude_damping_channel}

The amplitude damping channel $\mathcal{A}_\gamma(\rho)$ describes dissipative relaxation of a qubit
toward its ground state. It is a standard model for energy loss, for
instance spontaneous emission of a two-level atom or relaxation in a
spin system. The channel is defined by the Kraus operators~\cite{Nielsen2000}

\begin{align*}
K_0 &=
\ket{0}\!\bra{0}
+
\sqrt{1-\gamma}\,\ket{1}\!\bra{1},
\\
K_1 &=
\sqrt{\gamma}\,\ket{0}\!\bra{1},
\end{align*}

\noindent where \(0\leq \gamma \leq 1\) is the damping probability. These operators
satisfy

\begin{equation*}
K_0^\dagger K_0+K_1^\dagger K_1=\I,
\end{equation*}

\noindent and the channel action is

\begin{equation*}
\mathcal{A}_\gamma(\rho)
=
K_0\rho K_0^\dagger
+
K_1\rho K_1^\dagger .
\end{equation*}

Let the input state be written in Bloch form as

\begin{equation*}
\rho(\mathbf r)
=
\frac{1}{2}
\left(
\I+\mathbf r\cdot\boldsymbol{\sigma}
\right),
\qquad
\mathbf r=(r_x,r_y,r_z).
\end{equation*}

\noindent A direct calculation gives

\begin{align*}
\mathcal{A}_\gamma(\rho(\mathbf r))
=&
\frac{1}{2}
\Big[
\I
+
\sqrt{1-\gamma}\,r_x\sigma_x
+
\sqrt{1-\gamma}\,r_y\sigma_y\\
&+
\bigl(\gamma+(1-\gamma)r_z\bigr)\sigma_z
\Big].
\end{align*}

Thus, in the Bloch representation, the amplitude damping channel $\mathcal{A}_\gamma(\rho)$ acts as
the affine transformation 

\[
\mathbf r=(r_x,r_y,r_z) \longmapsto \mathbf r'
\]

\noindent where

\begin{equation*}
\mathbf r'
=
\left(
\sqrt{1-\gamma}\,r_x,\,
\sqrt{1-\gamma}\,r_y,\,
\gamma+(1-\gamma)r_z
\right).
\end{equation*}

\subsection{Imperfect quantum NOT gate}
\label{ssec:imperfect_not}

The quantum NOT gate for a single qubit is implemented by the unitary
transformation~\cite{Nielsen2000}

\begin{equation}
\mathcal{N}(\rho) = \sigma_x \rho \sigma_x^\dagger,
\label{eq:ideal_not_bloch}
\end{equation}

\noindent which corresponds, up to a global phase, to a rotation of angle $\pi$
about the $x$-axis. In the Bloch representation, this operation maps

\begin{equation*}
\mathbf r = (x,y,z) \;\longmapsto\; \mathbf r' = (x,-y,-z),
\end{equation*}

\noindent and therefore inverts the Bloch vector with respect to the $x$-axis.

In realistic implementations, however, control errors may lead to
imperfect rotations. A common source of such imperfections is a
miscalibrated control pulse, resulting in a rotation angle that deviates
from $\pi$. To model this situation, we consider a stochastic mixture of
unitary rotations about the $x$-axis.

Specifically, we assume that with probability $(1-p)$ the ideal NOT gate
is applied, while with probability $p$ the system undergoes a slightly
imperfect rotation of angle $\pi+\delta\alpha$. Defining the unitary transformation

\begin{equation*}
U_\delta = e^{-i \frac{(\pi+\delta\alpha)}{2}\sigma_x},
\end{equation*}

\noindent the corresponding quantum channel is given by

\begin{equation}
\Phi^{\mathrm{NOT}}_{p}(\rho)
=
(1-p)\,\sigma_x \rho \sigma_x^\dagger
+
p\, U_\delta \rho U_\delta^\dagger.
\label{eq:imperfect_not_channel}
\end{equation}

This defines a completely positive trace-preserving map with Kraus
operators

\begin{equation*}
K_0 = \sqrt{1-p}\,\sigma_x,
\qquad
K_1 = \sqrt{p}\,U_\delta,
\end{equation*}

\noindent which satisfy $\sum_i K_i^\dagger K_i = \I$.\par

Let $\rho(\mathbf{r}) = \frac{1}{2}(\I + \mathbf{r}\cdot\boldsymbol{\sigma})$
be the Bloch representation of the input state. Since both $\sigma_x$
and $U_\delta$ correspond to rotations about the $x$-axis, the channel
acts as a probabilistic mixture of such rotations. A direct calculation shows that
\(\Phi^{\mathrm{NOT}}_{p}(\rho)\) maps
\(\mathbf r = (r_x,r_y,r_z)\) to \(\mathbf r'\), where

\begin{equation*}
\mathbf{r}' =
\begin{pmatrix}
r_x \\
-\left[(1-p) + p\cos(\delta\alpha)\right] r_y
+ p \sin(\delta\alpha)\, r_z \\
-\left[(1-p) + p\cos(\delta\alpha)\right] r_z
- p \sin(\delta\alpha)\, r_y
\end{pmatrix}.
\end{equation*}

Thus, the imperfect NOT gate $\Phi^{\mathrm{NOT}}_{p}(\rho)$ leaves \(r_x\) invariant, while the
$(y,z)$ components undergo a rotation and attenuation determined by the
parameters $p$ and $\delta\alpha$. In the limit $\delta\alpha=0$, one
recovers the ideal NOT gate \eqref{eq:ideal_not_bloch}, while for small $\delta\alpha$ the channel
describes a weak coherent error around the $x$-axis.\par

\section{Qubit purifications in Fano form and the Procrustes reduction}
\label{sec:fano_procrustes}

In this section we summarize the algebraic structure underlying the
Fano--Procrustes construction used in this work.
Our aim is to show explicitly how the optimization over purifications of
mixed one-qubit states reduces to an orthogonal Procrustes problem \cite{Schonemann1966}.\par

The parametrization used in this section is based on the characterization of qubit purifications presented in~\cite{Constantinescu2002}, where it is shown that all purifications of a mixed qubit state can be generated via right multiplication of a fixed solution by elements of the Lie group $SO(3,\mathbb{R}^3)$. We adopt the corresponding Fano-form representation and adapt it to the optimization problem considered here.

\subsection{Fano representation of a two-qubit purification}

Let

\begin{equation*}
\rho(\mathbf r)=\frac{1}{2}\left(\I+\mathbf r\cdot\boldsymbol{\sigma}\right),
\qquad
\|\mathbf r\|<1,
\end{equation*}

\noindent be a mixed qubit state. A purification of \(\rho(\mathbf r)\) may be represented by the pure two-qubit projector

\begin{equation*}
P_\rho=\ket{\Psi_\rho}\bra{\Psi_\rho},
\end{equation*}

\noindent which admits the expansion~\cite{Constantinescu2002}

\begin{equation*}
P_\rho
=
\frac{1}{4}
\left(
\I\otimes \I
+\mathbf r\cdot\boldsymbol{\sigma}\otimes \I
+\boldsymbol{\gamma}\cdot \I\otimes\boldsymbol{\sigma}
+\sum_{i,j=1}^{3}A_{ij}\,\sigma_i\otimes\sigma_j
\right),
\end{equation*}

\noindent where \(\boldsymbol{\gamma}=A^{T}\mathbf r\in\mathbb{R}^{3}\) denotes the ancilla Bloch vector, and \(A\in\mathbb{R}^{3\times 3}\) is the correlation matrix, whose entries are given by

\[
A_{ij}=\Tr(P_\rho\,\sigma_i\otimes\sigma_j).
\]

Tracing over the ancilla gives

\begin{equation*}
\Tr_{2}(P_\rho)=\rho(\mathbf r),
\end{equation*}

\noindent while tracing over the system gives the ancillary reduced state

\begin{equation*}
\Tr_{1}(P_\rho)=\frac{1}{2}\left(\I+\boldsymbol{\gamma}\cdot\boldsymbol{\sigma}\right).
\end{equation*}

\subsection{Purity constraints}

The coefficients \(A_{ij}\) of the matrix $A$ represent the correlations between Pauli observables on the system and
the ancilla, and therefore define a correlation matrix. However, for
pure states (e.g., purifications), these coefficients are not independent,
but are constrained by the purity condition 

\begin{equation}
P_\rho^2=P_\rho.
\label{eq:app_idempotent}
\end{equation}

By expanding Eq.~\eqref{eq:app_idempotent} in the Pauli basis, one obtains
the standard purity constraints for two-qubit pure states. The most relevant
ones for our purposes are

\begin{align}
AA^T &= (1-r^2)\I_3+\mathbf r\,\mathbf r^T,
\label{eq:app_AAT}
\\
A^T A &= (1-\gamma^2)\I_3+\boldsymbol{\gamma}\,\boldsymbol{\gamma}^T,
\label{eq:app_ATA}
\end{align}

\noindent together with

\begin{equation*}
(\det A)^2=(1-\|\mathbf r\|^2)^2=(1-\|\boldsymbol{\gamma}\|^2)^2.
\end{equation*}

In the convention used throughout this work, we choose

\begin{equation}
\det A = \|\mathbf r\|^2-1 = \|\boldsymbol{\gamma}\|^2-1.
\label{eq:app_detA}
\end{equation}

In particular, Eqs.~\eqref{eq:app_AAT}--\eqref{eq:app_detA} imply

\begin{equation*}
\|\boldsymbol{\gamma}\|=\|\mathbf r\|.
\end{equation*}

\subsection{Overlap of purifications in Fano variables}
\label{sec:overlap_fano}

Let \(P_\rho\) and \(P_\sigma\) be purifications of the states

\begin{equation*}
\rho=\frac{1}{2}(\I+\mathbf r\cdot\boldsymbol{\sigma}),
\qquad
\sigma=\frac{1}{2}(\I+\mathbf s\cdot\boldsymbol{\sigma}),
\end{equation*}

\noindent with Fano parameters \((\mathbf r,\boldsymbol{\gamma},A)\) and
\((\mathbf s,\boldsymbol{\delta},B)\), respectively.
Since \(P_\rho\) and \(P_\sigma\) are rank-one projectors,

\begin{equation*}
\Tr(P_\rho P_\sigma)=|\braket{\Psi_\rho}{\Psi_\sigma}|^2.
\end{equation*}

Using the orthogonality relations

\begin{equation*}
\Tr(\sigma_i\sigma_j)=2\delta_{ij},
\qquad
\Tr(\sigma_i)=0,
\end{equation*}

\noindent one obtains

\begin{equation}
|\braket{\Psi_\rho}{\Psi_\sigma}|^2
=
\frac{1}{4}
\left[
1+\mathbf r\cdot\mathbf s
+\boldsymbol{\gamma}\cdot\boldsymbol{\delta}
+\Tr(A^T B)
\right].
\label{eq:overlap_fano}
\end{equation}

\subsection{General form of the purification variables}

Let \((\widetilde{\boldsymbol{\gamma}},\widetilde A)\) denote a particular
choice of ancilla Bloch vector and correlation matrix satisfying
Eqs.~\eqref{eq:app_AAT}--\eqref{eq:app_detA} for the state \(\rho(\mathbf r)\).
Then every other purification of the same state can be written as

\begin{equation*}
A(S)=\widetilde A\,S,
\qquad
\boldsymbol{\gamma}(S)=S^T\widetilde{\boldsymbol{\gamma}},
\qquad
S\in SO(3).
\end{equation*}

Indeed, if \(A\) and \(\widetilde A\) satisfy

\begin{equation*}
AA^T=\widetilde A \widetilde A^T,
\end{equation*}

\noindent then, since we restrict attention to mixed states, both matrices are
invertible and one may define

\begin{equation*}
S=\widetilde A^{-1}A.
\end{equation*}

It follows immediately that \(S\in O(3)\). Moreover, if both \(A\) and
\(\widetilde A\) are chosen with the convention \eqref{eq:app_detA}, then

\begin{equation*}
\det S=\frac{\det A}{\det\widetilde A}=1,
\end{equation*}

\noindent hence \(S\in SO(3)\).

The corresponding transformation law for the ancilla Bloch vector is
fixed by the relation \(\boldsymbol{\gamma}=A^T\mathbf r\). Indeed,

\begin{equation*}
\boldsymbol{\gamma}(S)
=
A(S)^T\mathbf r
=
S^T\widetilde A^T\mathbf r
=
S^T\widetilde{\boldsymbol{\gamma}}.
\end{equation*}

This is consistent with Eq.~\eqref{eq:app_ATA}.

\section{Derivation of the orthogonal Procrustes problem}
\label{sec:procrustes_derivation}

Fix a purification of \(\rho\), represented by Fano parameters
\((\mathbf r,\boldsymbol{\gamma},A)\), and choose a particular purification
of \(\sigma\), represented by
\((\mathbf s,\widetilde{\boldsymbol{\delta}},\widetilde B)\).
All purifications of \(\sigma\) are then represented as

\begin{equation}
B(S)=\widetilde B\,S,
\qquad
\boldsymbol{\delta}(S)=S^T\widetilde{\boldsymbol{\delta}},
\qquad
S\in SO(3).
\label{eq:B_of_S}
\end{equation}

Substituting Eq.~\eqref{eq:B_of_S} into Eq.~\eqref{eq:overlap_fano}, we obtain

\begin{equation*}
g^2(S)
=
\frac{1}{4}
\left[
1+\mathbf r\cdot\mathbf s
+\boldsymbol{\gamma}\cdot S^T\widetilde{\boldsymbol{\delta}}
+\Tr\!\left(A^T\widetilde B\,S\right)
\right].
\end{equation*}

Using the identity

\begin{equation*}
\boldsymbol{\gamma}\cdot S^T\widetilde{\boldsymbol{\delta}}
=
\Tr\!\left(
\boldsymbol{\gamma}\,\widetilde{\boldsymbol{\delta}}^{\,T} S
\right),
\end{equation*}

\noindent the \(S\)-dependent part can be written as a single trace:

\begin{equation*}
g^2(S)
=
\frac{1}{4}
\left[
1+\mathbf r\cdot\mathbf s
+\Tr(KS)
\right],
\end{equation*}

\noindent where

\begin{equation*}
K
\doteq
A^T\widetilde B
+
\boldsymbol{\gamma}\,\widetilde{\boldsymbol{\delta}}^{\,T}.
\end{equation*}

Therefore maximizing the purification overlap is equivalent to

\begin{equation}
\max_{S\in SO(3)} \Tr(KS).
\label{eq:procrustes_trace}
\end{equation}

This is the appropriate orthogonal Procrustes problem in the present setting~\cite{Schonemann1966}.

It is also useful to introduce the augmented matrices

\begin{equation*}
\widehat A \doteq
\begin{pmatrix}
A\\
\boldsymbol{\gamma}^{\,T}
\end{pmatrix},
\qquad
\widehat B \doteq
\begin{pmatrix}
\widetilde B\\
\widetilde{\boldsymbol{\delta}}^{\,T}
\end{pmatrix},
\end{equation*}

\noindent for which

\begin{equation*}
\widehat A^{\,T}\widehat B
=
A^T\widetilde B+\boldsymbol{\gamma}\,\widetilde{\boldsymbol{\delta}}^{\,T}
=
K.
\end{equation*}

Then

\begin{align*}
\|\widehat B S-\widehat A\|_{\HS}^2
&=
\Tr\!\left[(\widehat B S-\widehat A)^T(\widehat B S-\widehat A)\right]
\nonumber\\
&=
\Tr(\widehat B^T\widehat B)+\Tr(\widehat A^T\widehat A)-2\Tr(KS),
\end{align*}

\noindent and hence

\begin{equation*}
\max_{S\in SO(3)}\Tr(KS)
\iff
\min_{S\in SO(3)}\|\widehat B S-\widehat A\|_{\HS}.
\end{equation*}

Let

\begin{equation*}
K=U\Sigma V^T
\end{equation*}

\noindent be a singular value decomposition of \(K\)~\cite{Horn1986}, with

\noindent \(\Sigma=\diag(\sigma_1,\sigma_2,\sigma_3)\), \(\sigma_i\ge 0\).

Then the maximizing proper rotation is

\begin{equation*}
S_\star = V \Lambda U^T,
\qquad
\Lambda=\diag(1,1,\det(VU^T)).
\end{equation*}

The optimal purification of \(\sigma\) is therefore

\begin{equation*}
B_\star=\widetilde B\,S_\star,
\qquad
\boldsymbol{\delta}_\star=S_\star^T\widetilde{\boldsymbol{\delta}},
\end{equation*}

\noindent and the maximal overlap becomes

\begin{equation}
g_\star^2
=
\frac{1}{4}
\left[
1+\mathbf r\cdot\mathbf s
+\boldsymbol{\gamma}\cdot\boldsymbol{\delta}_\star
+\Tr(A^T B_\star)
\right].
\label{eq:gstar_corrected}
\end{equation}

\section{Optimal Procrustes rotation for qubit purifications}
\label{sec:Procrustes_solution}

In this section we present the explicit form of the orthogonal
Procrustes problem that determines the optimal rotation
\(S_\star\in SO(3)\) relating purifications of two mixed qubit states
\(\rho\) and \(\sigma\).

\subsection{A suitable canonical gauge}

Let

\begin{equation*}
r = \|\mathbf r\|, \qquad \mathbf n_r = \frac{\mathbf r}{r} \quad (r \neq 0),
\end{equation*}

\noindent and let \(O(\mathbf n_r) \in SO(3)\) satisfy

\begin{equation*}
O(\mathbf n_r)\,\hat{\mathbf z} = \mathbf n_r.
\end{equation*}

Define

\begin{equation*}
\alpha_r = \sqrt{1 - r^2}, 
\qquad 
D_r = \operatorname{diag}(\alpha_r, \alpha_r, -1).
\end{equation*}

Then

\begin{equation*}
A_r = O(\mathbf n_r)\, D_r
\end{equation*}

\noindent satisfies Eq.~\eqref{eq:app_AAT}, since

\begin{align*}
A_r A_r^T 
&= O(\mathbf n_r) D_r^2 O(\mathbf n_r)^T\\
&= O(\mathbf n_r)\,\operatorname{diag}(1 - r^2, 1 - r^2, 1)\,O(\mathbf n_r)^T\\
&= (1 - r^2) \I_3 + \mathbf r \mathbf r^T.
\end{align*}

In this gauge, the ancilla Bloch vector is determined by

\begin{equation*}
\boldsymbol{\gamma}_r = A_r^T \mathbf r.
\end{equation*}

Using \(\mathbf r = r\, O(\mathbf n_r)\hat{\mathbf z}\), one obtains

\begin{equation*}
\boldsymbol{\gamma}_r
= D_r O(\mathbf n_r)^T \mathbf r
= r\, D_r \hat{\mathbf z}
= -\,r\, \hat{\mathbf z}.
\end{equation*}

\noindent With this choice one has

\begin{equation*}
A_r^T A_r
= \operatorname{diag}(1 - r^2, 1 - r^2, 1)
= (1 - r^2) \I_3 + \boldsymbol{\gamma}_r \boldsymbol{\gamma}_r^T,
\end{equation*}

\noindent in agreement with Eq.~\eqref{eq:app_ATA}.

The same construction applies to the state \(\sigma\), yielding

\begin{equation*}
A_s = O(\mathbf n_s) D_s, 
\qquad 
\widetilde{\boldsymbol{\delta}}_s = -\,s\, \hat{\mathbf z},
\end{equation*}

\noindent with

\begin{align*}
s =& \|\mathbf s\|, 
\qquad 
\mathbf n_s = \frac{\mathbf s}{s},\\
D_s =& \operatorname{diag}(\alpha_s, \alpha_s, -1),
\qquad 
\alpha_s = \sqrt{1 - s^2}.
\end{align*}

\begin{remark}
The rotation \(O(\mathbf n)\) is not unique, since any additional
right multiplication by a rotation around \(\hat{\mathbf z}\) leaves
\(O(\mathbf n)\hat{\mathbf z}=\mathbf n\) unchanged. Throughout this
work we assume a fixed convention for choosing \(O(\mathbf n)\), for
example the minimal rotation that maps \(\hat{\mathbf z}\) into
\(\mathbf n\), with a prescribed choice in the exceptional case
\(\mathbf n=-\hat{\mathbf z}\). Once this convention is fixed, the
canonical gauge is well defined.
\end{remark}

\begin{remark}
The sign of the ancilla Bloch vector is fixed by the relation
\(\boldsymbol{\gamma}_r = A_r^T \mathbf r\).
While the choice \(\boldsymbol{\gamma}_r = \pm r\,\hat{\mathbf z}\)
is equivalent at the level of quadratic expressions such as
\(\boldsymbol{\gamma}_r \boldsymbol{\gamma}_r^T\),
the negative sign arises naturally from the canonical choice
\(D_r = \operatorname{diag}(\alpha_r,\alpha_r,-1)\).
\end{remark}

\subsection{General formulation}

In the canonical gauge, the Procrustes matrix is

\begin{equation*}
K
=
A_r^T A_s
+
\boldsymbol{\gamma}_r\,\widetilde{\boldsymbol{\delta}}_s^{\,T}.
\end{equation*}

Since

\begin{equation*}
A_r^T A_s
=
D_r\,O(\mathbf n_r)^T O(\mathbf n_s)\,D_s,
\end{equation*}

\noindent we may write

\begin{equation}
K
=
D_r\,R_{rs}\,D_s
+
rs\,\hat{\mathbf z}\hat{\mathbf z}^{\,T},
\label{eq:K_canonical}
\end{equation}

\noindent where

\begin{equation*}
R_{rs}\doteq O(\mathbf n_r)^T O(\mathbf n_s)\in SO(3)
\end{equation*}

\noindent is the relative rotation between the canonical frames associated with
\(\mathbf n_r\) and \(\mathbf n_s\).

The optimal proper rotation is then obtained from the singular value
decomposition

\begin{equation*}
K=U\Sigma V^T
\end{equation*}

\noindent through

\begin{equation*}
S_\star = V\Lambda U^T,
\qquad
\Lambda=\diag(1,1,\det(VU^T)).
\end{equation*}

Once \(S_\star\) is known, the optimal purification data of \(\sigma\) are

\begin{equation*}
B_\star = A_s S_\star,
\qquad
\boldsymbol{\delta}_\star=S_\star^T\widetilde{\boldsymbol{\delta}}_s,
\end{equation*}

and the maximal overlap is given by Eq.~\eqref{eq:gstar_corrected}.

\section{Procrustes optimal rotation angle}
\label{sec:Theta_angle}

The Procrustes formulation naturally yields a geometric quantity
associated with the alignment of purification frames. Any rotation \(S\in SO(3)\) admits an axis--angle representation

\begin{equation*}
S=\exp\!\left(\theta\,\hat{\mathbf u}\cdot \mathbf J\right),
\end{equation*}

\noindent where \(\theta\in[0,\pi]\), \(\hat{\mathbf u}\in\mathbb R^3\) is a unit vector,
and \(\mathbf J=(J_x,J_y,J_z)\) are the generators of \(\mathfrak{so}(3)\).
The rotation angle can be extracted from the trace via

\begin{equation*}
\theta_S
=
\arccos\!\left(\frac{\Tr(S)-1}{2}\right).
\end{equation*}

Let \(S_\star\in SO(3)\) be a solution of the Procrustes problem
\eqref{eq:procrustes_trace} in the canonical gauge. We define the \emph{purification misalignment angle}
as

\begin{equation*}
\Theta(\rho,\sigma)
=
\arccos\!\left(
\frac{\Tr(S_\star)-1}{2}
\right).
\end{equation*}

It is straightforward to verify that
\(\Theta(\rho,\rho)=0\), since in this case the Procrustes matrix is diagonal and the identity element \(\I_3\) of \(SO(3)\) achieves the maximal overlap.

For channel dynamics \(\rho\mapsto\Phi(\rho)\), the quantity
\(\Theta(\rho,\Phi(\rho))\) measures the rotation required to align the
optimal purification frames associated with the input and output states, within
the chosen gauge.

In contrast with \(D_N\), which depends only on the optimized overlap,
\(\Theta\) retains geometric information about how the optimal purifications
must be reoriented to achieve that overlap. In particular, if the Bloch vectors are parallel (i.e., \(\mathbf n_s=\mathbf n_r\)), then

\begin{equation*}
R_{rs}=\I_3,
\end{equation*}

\noindent and Eq.~\eqref{eq:K_canonical} reduces to

\begin{equation*}
K
=
D_r D_s + rs\,\hat{\mathbf z}\hat{\mathbf z}^{\,T}
=
\diag(\alpha_r\alpha_s,\alpha_r\alpha_s,1+rs).
\end{equation*}

Since \(K\) is diagonal and positive semidefinite, one finds

\begin{equation*}
S_\star=\I_3,
\end{equation*}

\noindent and therefore

\begin{equation*}
\Theta(\rho,\sigma)=0.
\end{equation*}

Thus, within the canonical purification gauge, \(\Theta\) provides a complementary
geometric indicator of the alignment structure underlying the Procrustes
optimization, supplementing the scalar information captured by the metric
\(D_N\).

\section{Purification misalignment and channel anisotropy}
\label{sec:theta_anisotropy}

Any completely positive trace-preserving (CPTP) map acting on a qubit can be
written in affine Bloch form as

\begin{equation*}
\mathbf s = M\mathbf r + \mathbf c,
\end{equation*}

\noindent where \(\mathbf r\) and \(\mathbf s\) are the Bloch vectors of the input and
output states, \(M\) is a real \(3\times 3\) matrix describing the linear
part of the transformation, and \(\mathbf c\in\mathbb R^3\) is a translation
vector.
The term \(M\mathbf r\) accounts for contraction and anisotropic deformation
of the Bloch sphere, while \(\mathbf c\) represents the affine shift
characteristic of nonunital channels.

A particularly simple situation occurs when the channel preserves the Bloch
direction, so that the input and output vectors are parallel:

\begin{equation*}
\mathbf s = \mu(\mathbf r)\,\mathbf r,
\qquad
\mu(\mathbf r)\ge 0.
\end{equation*}

In this case the Procrustes matrix \(K\) is diagonal in the
canonical gauge, and one finds

\begin{equation*}
S_\star=\I_3,
\qquad
\Theta(\rho,\Phi(\rho))=0.
\end{equation*}

Thus purely radial motion inside the Bloch ball changes the mixedness of the
state but does not twist the optimal purification frame.

When the channel does not preserve the Bloch direction,
the output Bloch direction differs from the input one and the 
Procrustes matrix \(K\) is no longer diagonal in the canonical gauge.
The optimal alignment is then generally nontrivial, and one obtains

\begin{equation*}
\Theta(\rho,\Phi(\rho))>0
\end{equation*}

\noindent for generic inputs.

The purification misalignment angle therefore isolates the component of the
channel action that changes the orientation of the optimal purification frame.
Radial changes of the Bloch radius leave \(\Theta\) equal to zero, while
anisotropic deformations and affine drifts may produce a finite value.

In this sense \(\Theta(\rho,\Phi(\rho))\) may be viewed as a geometric
indicator of the nonradial part of the channel action relative to the chosen
input state.
While the scalar metric $D_N$ measures the magnitude of the optimized
purification overlap, the angle \(\Theta\) retains directional information
about how the channel twists the corresponding optimal purification frame.

\section{Analytical results}
\label{sec:analytical_results}

In this section we provide closed-form expressions for the purification
overlap and the associated metric $D_N$ for several important classes of
quantum channels. The analysis focuses on families of input states that are
adapted to the symmetry of the channel, so that the Bloch vectors of the
input and output states remain collinear. In such cases, the Procrustes
 problem becomes trivial, the optimal rotation reduces to the
identity, and the overlap can be computed analytically without performing
an explicit singular-value optimization. This allows for simple expressions
of both the overlap and the metric in terms of the Bloch parameters.

\subsection{Pauli-type channels}
\label{sec:pauli_analytical}

\subsubsection{Depolarizing channel}

The depolarizing channel $\Phi_p^{\mathrm{dep}}(\rho)$, with error probability \(p\), acts on the Bloch vector as (cf.~Sec.\ref{sssec:depolarizing_channel})

\begin{equation*}
\mathbf r
\longmapsto
\mathbf r'
=
(1-p)\mathbf r .
\end{equation*}

The channel is thus purely radial: it preserves the direction of the
Bloch vector and decreases only its length. For \(p=0\) the channel is
the identity map, while for \(p=1\) every input state is mapped to the
maximally mixed state.

The depolarizing channel provides an important reference case for the
Fano--Procrustes framework. Since the input and output Bloch vectors are
parallel for all \(p\in[0,1]\), the canonical Procrustes matrix is
diagonal in the adapted frame and the optimal rotation is trivial:

\begin{equation*}
S_\star=\I_3 .
\end{equation*}

Consequently,

\begin{equation*}
\Theta\!\left(\rho,\Phi_p^{\mathrm{dep}}(\rho)\right)=0
\qquad
\text{for all } p\in[0,1].
\end{equation*}

Nevertheless, the metric $D_N$ is generally nonzero for \(p>0\),
because the channel changes the purity of the state by contracting the
Bloch vector. Therefore (cf. Appendix~\ref{app:collinear})

\begin{eqnarray*}
g_{\mathrm{dep}}(r,p)
=&
\frac{1}{2}
\left[
\sqrt{(1+r)(1+(1-p)r)}\right.\\
&+
\left.\sqrt{(1-r)(1-(1-p)r)}
\right],
\end{eqnarray*}

\begin{equation*}
D_N\!\left(\rho,\Phi_p^{\mathrm{dep}}(\rho)\right)
=
\sqrt{
\hbin\!\left(
\frac{1+g_{\mathrm{dep}}(r,p)}{2}
\right)
}.
\end{equation*}

\subsubsection{Bit-flip channel}

The bit-flip channel $\Phi^{BF}_p(\rho)$, with error probability \(p\), acts on the Bloch vector as (cf.~Sec.~\ref{sssec:bit_flip_channel})

\begin{equation*}
\mathbf r = (x,y,z)\mapsto \mathbf r' = (x,(1-2p)y,(1-2p)z).
\end{equation*}

For the \(z\)-axis family

\begin{equation*}
\rho_r^{(\z)}=\frac{1}{2}(\I+r\,\sigma_z),
\end{equation*}

\noindent one gets

\begin{equation*}
\Phi^{BF}_p(\rho_r^{(\z)})
=
\frac{1}{2}\left(\I+(1-2p)r\,\sigma_z\right),
\end{equation*}

Since the input and output Bloch vectors are collinear, the corresponding
density matrices commute and share a common eigenbasis. In this case the
Procrustes optimization is trivial and yields

\begin{equation*}
S_\star=\I_3,
\end{equation*}

\noindent so that the misalignment angle $\Theta$ vanishes.

Therefore (cf. Appendix~\ref{app:collinear})

\begin{eqnarray*}
g_{\mathrm{BF}}(r,p)
=&
\frac{1}{2}
\left[
\sqrt{(1+r)(1+(1-2p)r)}\right.\\
&+
\left.\sqrt{(1-r)(1-(1-2p)r)}
\right],
\end{eqnarray*}

\begin{equation*}
D_N\!\left(\rho_r^{(\z)},\mathcal{B}_p(\rho_r^{(\z)})\right)
=
\sqrt{
\hbin\!\left(
\frac{1+g_{\mathrm{BF}}(r,p)}{2}
\right)
}.
\end{equation*}

\subsubsection{Phase-flip channel}

The phase-flip channel $\Phi^{PF}_p(\rho)$, with error probability \(p\), acts on the Bloch vector as(cf.~Sec.~\ref{sssec:phase_flip_channel})

\begin{equation*}
\mathbf r = (r_x,r_y,r_z)\mapsto \mathbf r' = ((1-2p)r_x,(1-2p)r_y,r_z).
\end{equation*}

For equatorial \(x\)-axis states, the input and output are collinear in Bloch space. Thus, the Procrustes rotation is $S_\star=\I_3$, and the misalignment angle $\Theta$ vanishes. Therefore (see Appendix~\ref{app:collinear})

\begin{eqnarray*}
g_{\mathrm{PF}}(r,p)
=&
\frac{1}{2}
\left[
\sqrt{(1+r)(1+(1-2p)r)}\right.\\
&+
\left.\sqrt{(1-r)(1-(1-2p)r)}
\right],
\end{eqnarray*}

\noindent and

\begin{equation*}
D_N\!\left(\rho_r^{(\x)},\mathcal{P}_p(\rho_r^{(\x)})\right)
=
\sqrt{
\hbin\!\left(
\frac{1+g_{\mathrm{PF}}(r,p)}{2}
\right)
}.
\end{equation*}

\subsubsection{General diagonal Pauli channel}

A diagonal Pauli channel has the Bloch action

\begin{equation*}
\mathbf r = (r_x,r_y,r_z)\mapsto \mathbf r' = (\lambda_x x,\lambda_y y,\lambda_z z),
\end{equation*}

\noindent subject to complete positivity constraints. For axis-adapted families

\begin{equation*}
\rho_r^{(i)}=\frac{1}{2}(\I+r\,\sigma_i),
\qquad i\in\{x,y,z\},
\end{equation*}

\noindent the output remains collinear with the input:

\begin{equation*}
\rho_r^{(i)} \mapsto \frac{1}{2}(\I+\lambda_i r\,\sigma_i),
\end{equation*}

\noindent the Procrustes rotation is $S_\star=\I_3$, and the misalignment angle $\Theta=0$. Therefore (cf. Appendix~\ref{app:collinear})

\begin{align*}
g_{\mathrm{Pauli},i}(r,\lambda_i)
=&
\frac{1}{2}
\left[
\sqrt{(1+r)(1+\lambda_i r)}\right.\\
&+
\left.\sqrt{(1-r)(1-\lambda_i r)}
\right],
\end{align*}

\begin{equation*}
D_N
=
\sqrt{
\hbin\!\left(
\frac{1+g_{\mathrm{Pauli},i}(r,\lambda_i)}{2}
\right)
}.
\end{equation*}

\subsection{Amplitude damping}
\label{ssec:amplitude}

The amplitude damping channel $\mathcal{A}_\gamma(\rho)$, with error probability \(p\), acts on the Bloch vector as (cf.~Sec.~\ref{sssec:amplitude_damping_channel})

\begin{equation*}
\mathbf r = (r_x,r_y,r_z)\mapsto \mathbf r'
\end{equation*}

\noindent where

\begin{equation*}
\mathbf r' = \left(
\sqrt{1-\gamma}\,r_x,\,
\sqrt{1-\gamma}\,r_y,\,
\gamma+(1-\gamma)r_z
\right).
\end{equation*}

For states diagonal in the energy basis,

\begin{equation*}
\rho_r^{(\z)}=\frac{1}{2}(\I+r\,\sigma_z),
\qquad -1\le r\le 1,
\end{equation*}

\noindent the output is

\begin{equation*}
\mathcal{A}_\gamma(\rho_r^{(\z)})
=
\frac{1}{2}\left[\I+r'\sigma_z\right],
\qquad
r'=\gamma+(1-\gamma)r.
\end{equation*}

Since the input and output states remain diagonal in the energy basis,
they are collinear in Bloch space. Hence, in the adapted frame,

\begin{equation*}
S_\star=\I_3,
\end{equation*}

\noindent and the misalignment angle $\Theta$ vanishes.
Therefore, the overlap is (cf. Appendix~\ref{app:collinear})

\begin{align*}
g_{\mathrm{AD}}(r,\gamma)
=&
\frac{1}{2}
\left[
\sqrt{(1+r)(1+r')}\right.\\
&+
\left.\sqrt{(1-r)(1-r')}
\right],\\
r'=&\gamma+(1-\gamma)r,
\end{align*}

and

\begin{equation*}
D_N\!\left(\rho_r^{(\z)},\mathcal{A}_\gamma(\rho_r^{(\z)})\right)
=
\sqrt{
\hbin\!\left(
\frac{1+g_{\mathrm{AD}}(r,\gamma)}{2}
\right)
}.
\end{equation*}

\section{Numerical implementation for generic channels}
\label{sec:numerical}

In addition to the analytically tractable families discussed above, it is instructive
to explore the behavior of the metric $D_N$ and the optimal purification alignment
$\Theta$ for generic qubit states and channels.  In this section we present a set of numerical
examples illustrating how the Fano--Procrustes framework reveals geometric
information that is not captured by scalar distinguishability measures alone.

\subsection{General algorithm for generic qubit channels}

Let \(\Phi\) be a qubit CPTP map. In Bloch form,

\begin{equation*}
\mathbf r \mapsto \mathbf r' = M\mathbf r + \mathbf c,
\end{equation*}

\noindent where \(M\) is a real \(3\times 3\) matrix and
\(\mathbf c\in\mathbb R^3\).

Given an input Bloch vector \(\mathbf r\), the numerical procedure
is as follows:

\begin{enumerate}[label=\arabic*.]

\item Compute the output Bloch vector

\begin{equation*}
\mathbf r' = M\mathbf r+\mathbf c.
\end{equation*}

\item Construct the canonical purification data
\((A_r,\boldsymbol{\gamma}_r)\) and
\((A_{r'},\widetilde{\boldsymbol{\delta}}_{r'})\), with

\begin{equation*}
A_r = O(\mathbf n_r)\diag(\sqrt{1-r^2},\sqrt{1-r^2},-1),
\qquad
\boldsymbol{\gamma}_r=r\,\hat{\mathbf z},
\end{equation*}

and analogously for \(\mathbf r'\).

\item Form the Procrustes matrix

\begin{equation*}
K = A_r^T A_{r'} + \boldsymbol{\gamma}_r\,\widetilde{\boldsymbol{\delta}}_{r'}^{\,T}.
\end{equation*}

\item Compute the singular value decomposition~\cite{Horn1986}

\begin{equation*}
K=U\Sigma V^T.
\end{equation*}

\item Obtain the optimal proper rotation

\begin{equation*}
S_\star = V\Lambda U^T,
\qquad
\Lambda=\diag(1,1,\det(VU^T)).
\end{equation*}

\item Form the aligned purification data

\begin{equation*}
B_\star = A_{r'} S_\star,
\qquad
\boldsymbol{\delta}_\star=S_\star^T\widetilde{\boldsymbol{\delta}}_{r'}.
\end{equation*}

\item Compute the optimized overlap from

\begin{equation*}
g_\star^2
=
\frac{1}{4}
\left[
1+\mathbf r\cdot\mathbf r'
+\boldsymbol{\gamma}_r\cdot\boldsymbol{\delta}_\star
+\Tr(A_r^T B_\star)
\right],
\end{equation*}

\noindent and then evaluate

\begin{equation*}
D_N(\rho,\Phi(\rho))
=
\sqrt{
\hbin\!\left(\frac{1+g_\star}{2}\right)
}.
\end{equation*}

\item Compute the purification misalignment angle through

\begin{equation*}
\Theta(\rho,\Phi(\rho))
=
\arccos\!\left(\frac{\Tr(S_\star)-1}{2}\right).
\end{equation*}

\end{enumerate}


\subsection{Bit-flip channel}
\label{ssec:bit_flip_numerical}

In the Bloch representation, the bit-flip channel $\Phi_p^{\mathrm{BF}}(\rho)$ acts as
(cf.~Sec.~\ref{sssec:bit_flip_channel})

\begin{equation*}
\mathbf r = (r_x,r_y,r_z)\mapsto \mathbf r' = (r_x,(1-2p)r_y,(1-2p)r_z).
\end{equation*}

The \(x\)-component is invariant, while the components perpendicular to
the \(x\)-axis are contracted by the factor \(\lambda\). For \(p=0\) the
channel reduces to the identity, whereas for \(p=1\) it becomes the
unitary bit-flip operation \(\rho\mapsto\sigma_x\rho\sigma_x\).

The bit-flip channel provides a useful example for the
Fano--Procrustes framework because its action is anisotropic in Bloch
space. For input states aligned with the \(x\)-axis, the input and
output Bloch vectors are collinear and the Procrustes rotation is
trivial. For generic input states, however, the output direction differs
from the input direction, and the optimal purification alignment is
generally nontrivial.

Figure~\ref{fig:bf_DN_radii} shows the metric $D_N$ as a function of
\(p\) for a fixed pair of Bloch angles $(\phi,\theta)$ and several values of the
Bloch-vector length $r=\|\mathbf r\|$. The dependence on the radius reflects the fact
that states closer to the surface of the Bloch ball are more sensitive
to bit-flip noise. The metric vanishes at \(p=0\), as expected, and
increases as the output state departs from the input state.

\begin{figure}[t]
\centering
\includegraphics[width=0.9\linewidth]{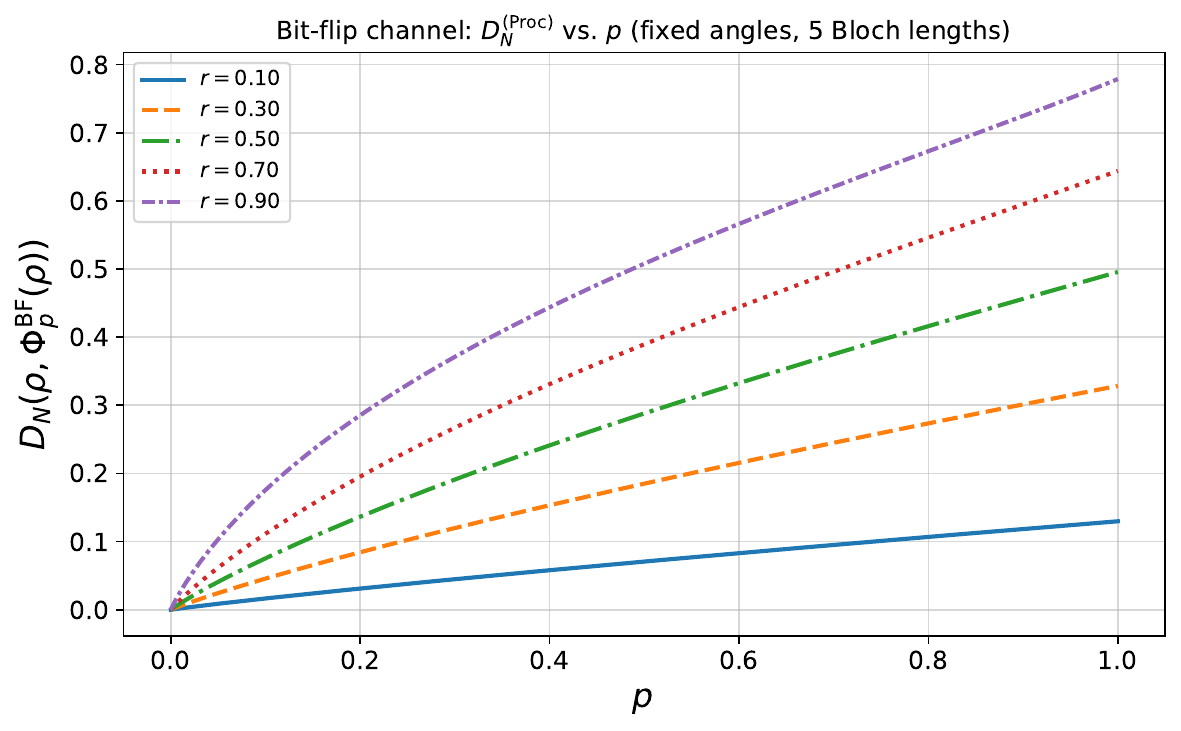}
\caption{The metric
\(D_N(\rho,\Phi_p^{\mathrm{BF}}(\rho))\) as a function of the bit-flip
probability \(p\), for fixed Bloch angles and different values of the
Bloch-vector length.}
\label{fig:bf_DN_radii}
\end{figure}

The corresponding purification misalignment angle is shown in
Fig.~\ref{fig:bf_Theta_radii}. In contrast with the scalar metric,
\(\Theta\) is sensitive to the change in orientation of the Bloch vector
induced by the anisotropic contraction of the \(y\) and \(z\)
components. For generic input states, this produces a nonzero optimal
rotation of the purification frame.

\begin{figure}[t]
\centering
\includegraphics[width=0.9\linewidth]{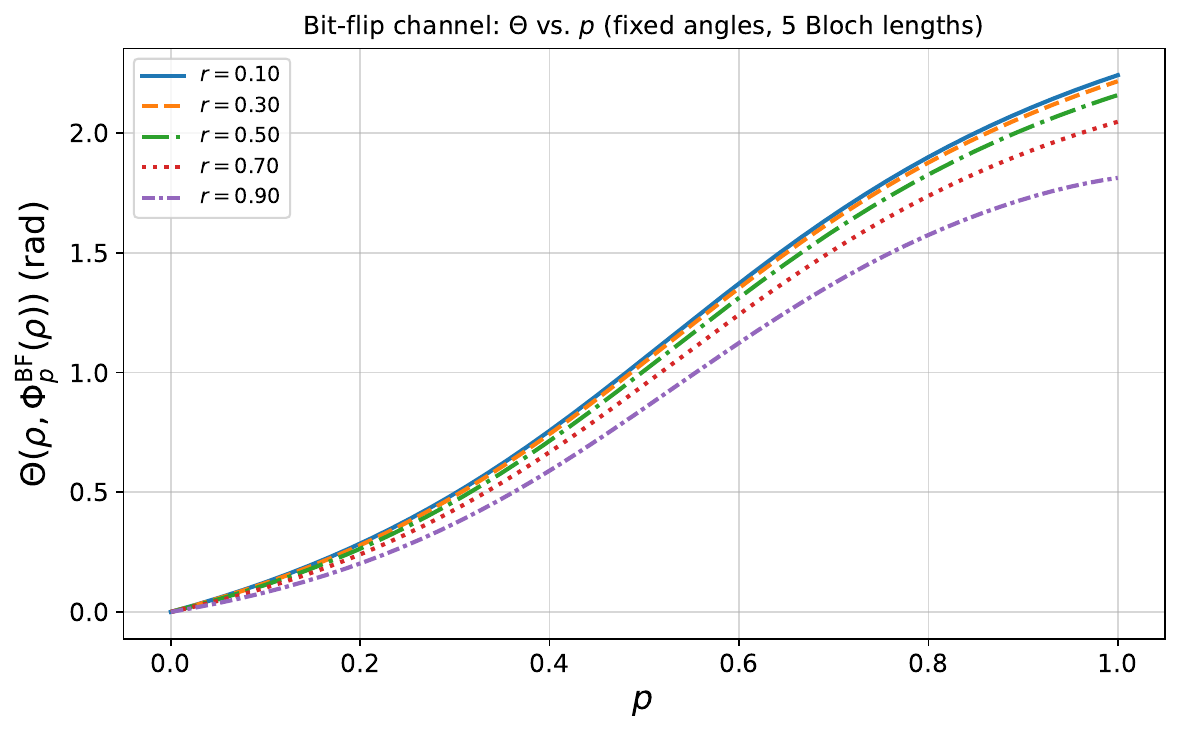}
\caption{Purification misalignment angle
\(\Theta(\rho,\Phi_p^{\mathrm{BF}}(\rho))\) as a function of the
bit-flip probability \(p\), for the same configurations as in
Fig.~\ref{fig:bf_DN_radii}. The angle captures the geometric
realignment of the optimal purification frames induced by the
anisotropic channel action.}
\label{fig:bf_Theta_radii}
\end{figure}

We next consider the dependence on the orientation of the input state.
Figure~\ref{fig:bf_DN_angles} shows
\(D_N(\rho,\Phi_p^{\mathrm{BF}}(\rho))\) for a fixed Bloch-vector length
and several choices of pairs of Bloch angles $(\phi,\theta)$. Since the bit-flip channel leaves
the \(x\)-axis invariant while contracting the transverse directions,
the scalar distinguishability depends strongly on how much of the input
Bloch vector lies in the plane perpendicular to the \(x\)-axis.

\begin{figure}[t]
\centering
\includegraphics[width=0.9\linewidth]{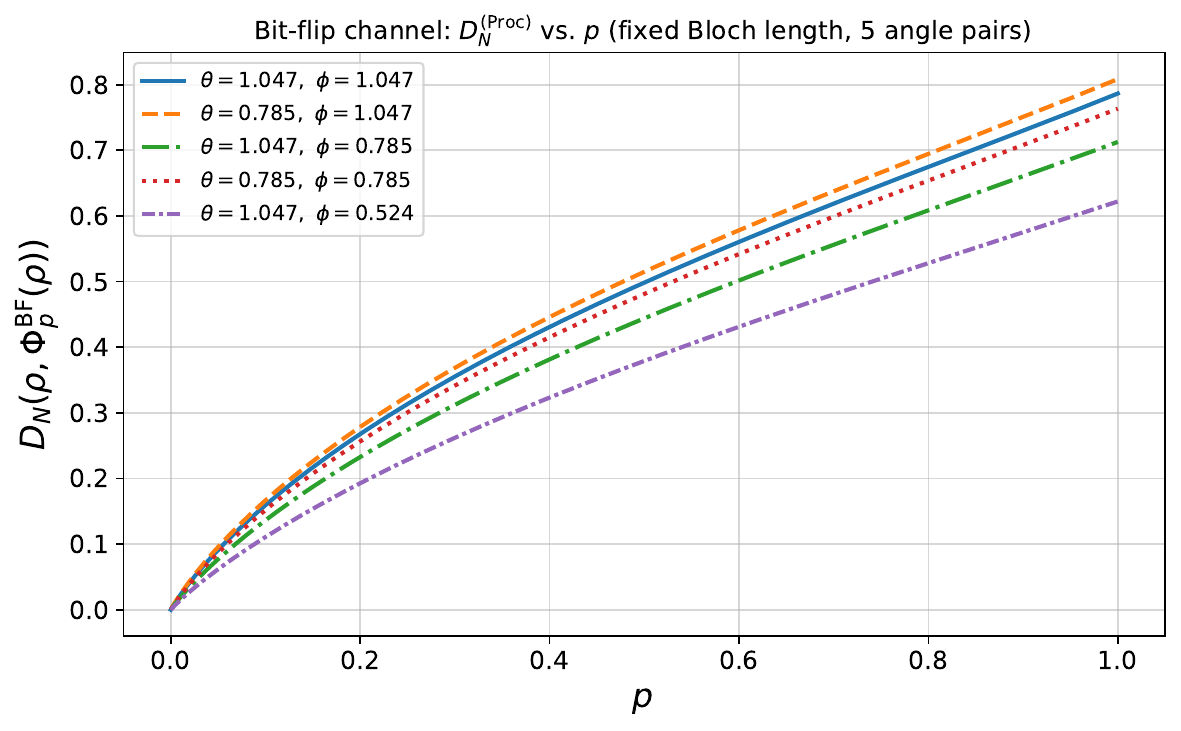}
\caption{The metric
\(D_N(\rho,\Phi_p^{\mathrm{BF}}(\rho))\) as a function of \(p\), for a
fixed Bloch-vector length and different orientations of the input
state.}
\label{fig:bf_DN_angles}
\end{figure}

The corresponding behavior of the misalignment angle is displayed in
Fig.~\ref{fig:bf_Theta_angles}. The dependence on the input direction
is especially relevant for \(\Theta\), because the optimal Procrustes
rotation detects the orientation change of the purification frame. For
states aligned with the symmetry axis of the channel, the action is
purely radial or trivial and the misalignment angle vanishes. Away from
that axis, the anisotropic contraction generally produces a nonzero
\(\Theta\).

\begin{figure}[t]
\centering
\includegraphics[width=0.9\linewidth]{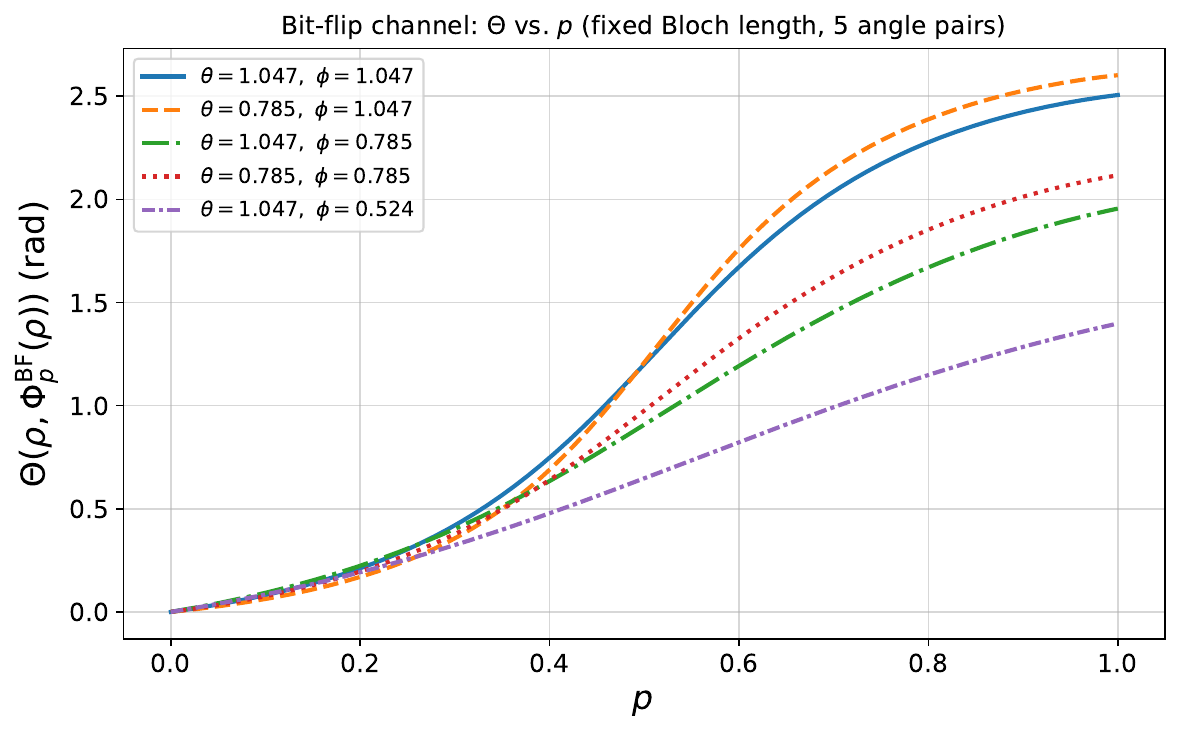}
\caption{Purification misalignment angle
\(\Theta(\rho,\Phi_p^{\mathrm{BF}}(\rho))\) as a function of \(p\), for
the same configurations as in Fig.~\ref{fig:bf_DN_angles}. The
orientation dependence reflects the anisotropic nature of the bit-flip
channel.}
\label{fig:bf_Theta_angles}
\end{figure}

The results show that \(D_N\) and \(\Theta\) provide complementary
information about the action of the bit-flip channel. The metric
\(D_N\) quantifies the magnitude of the change in the state, while
\(\Theta\) characterizes the geometric reorientation required to align
the optimal purification frames.


\subsection{Phase-flip channel}
\label{ssec:phase_flip_numerical}

The phase-flip channel $\Phi_p^{\mathrm{PF}}(\rho)$ acts on the Bloch vector as (cf.~Sec.~\ref{sssec:phase_flip_channel})

\begin{equation*}
\mathbf r=(r_x,r_y,r_z)
\longmapsto
\mathbf r'
=
\bigl(\lambda r_x,\lambda r_y,r_z\bigr),
\qquad
\lambda=1-2p .
\end{equation*}

The \(z\)-component is invariant, while the coherences, represented by
the transverse components \(r_x\) and \(r_y\), are contracted by the
factor \(\lambda\). For \(p=0\) the channel is the identity, while for
\(p=1\) it becomes the unitary phase flip
\(\rho\mapsto\sigma_z\rho\sigma_z\). At \(p=1/2\), the transverse
components vanish and the channel acts as complete dephasing in the
\(\sigma_z\) eigenbasis.

The phase-flip channel is a useful example for the Fano--Procrustes
framework because it is anisotropic but unital. It singles out the
\(z\)-axis as a symmetry axis. For input states aligned with the
\(z\)-axis, the input and output Bloch vectors coincide and the
Procrustes rotation is trivial. For generic input states, however, the
transverse contraction changes the direction of the Bloch vector, and
the optimal purification alignment can become nontrivial.

Figure~\ref{fig:pf_DN_radii} shows the metric $D_N$ as a function of
\(p\), for a fixed pair of Bloch angles and several values of the
Bloch-vector length. The metric vanishes at \(p=0\), where the channel
is the identity, and increases as the phase-flip probability grows. The
dependence on the Bloch-vector length reflects the fact that states
with larger radius, and hence higher purity, are more sensitive to the
loss of transverse coherence.

\begin{figure}[t]
\centering
\includegraphics[width=0.9\linewidth]{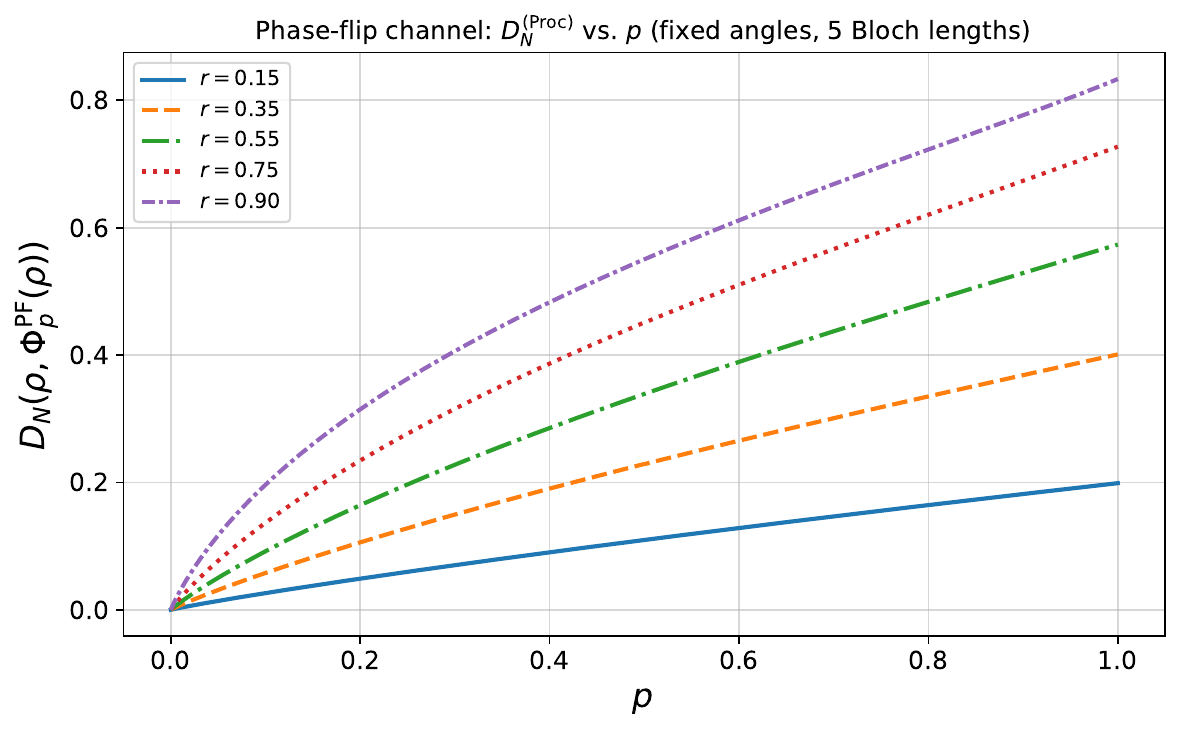}
\caption{The metric 
\(D_N(\rho,\Phi_p^{\mathrm{PF}}(\rho))\) as a function of the
phase-flip probability \(p\), for fixed Bloch angles and different
values of the Bloch-vector length.}
\label{fig:pf_DN_radii}
\end{figure}

The corresponding purification misalignment angle is shown in
Fig.~\ref{fig:pf_Theta_radii}. Since the phase-flip channel contracts
the components perpendicular to the \(z\)-axis while leaving the
\(z\)-component unchanged, the output Bloch direction generally differs
from the input direction. This produces a nontrivial optimal
Procrustes rotation and hence a nonzero value of \(\Theta\) for generic
input states.

\begin{figure}[t]
\centering
\includegraphics[width=0.9\linewidth]{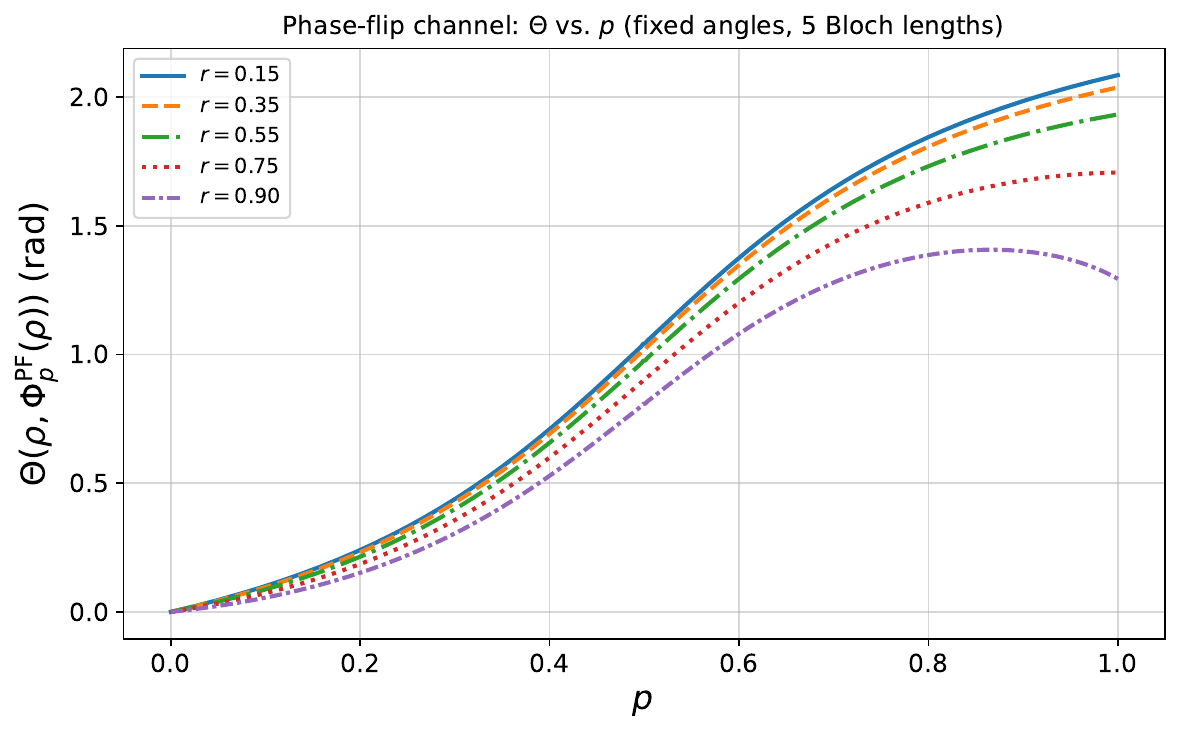}
\caption{Purification misalignment angle
\(\Theta(\rho,\Phi_p^{\mathrm{PF}}(\rho))\) as a function of the
phase-flip probability \(p\), for the same configurations as in
Fig.~\ref{fig:pf_DN_radii}. The angle detects the geometric
realignment of the optimal purification frames induced by the
anisotropic contraction of the transverse Bloch components.}
\label{fig:pf_Theta_radii}
\end{figure}

We next examine the dependence on the orientation of the input state.
Figure~\ref{fig:pf_DN_angles} shows
\(D_N(\rho,\Phi_p^{\mathrm{PF}}(\rho))\) for a fixed Bloch-vector length
and several choices of Bloch angles. Because the channel preserves the
\(z\)-axis and suppresses the transverse components, the scalar
distinguishability depends on the relative weight of the input Bloch
vector in the transverse plane. States close to the \(z\)-axis are less
affected, while states with larger transverse components experience a
larger change.

\begin{figure}[t]
\centering
\includegraphics[width=0.9\linewidth]{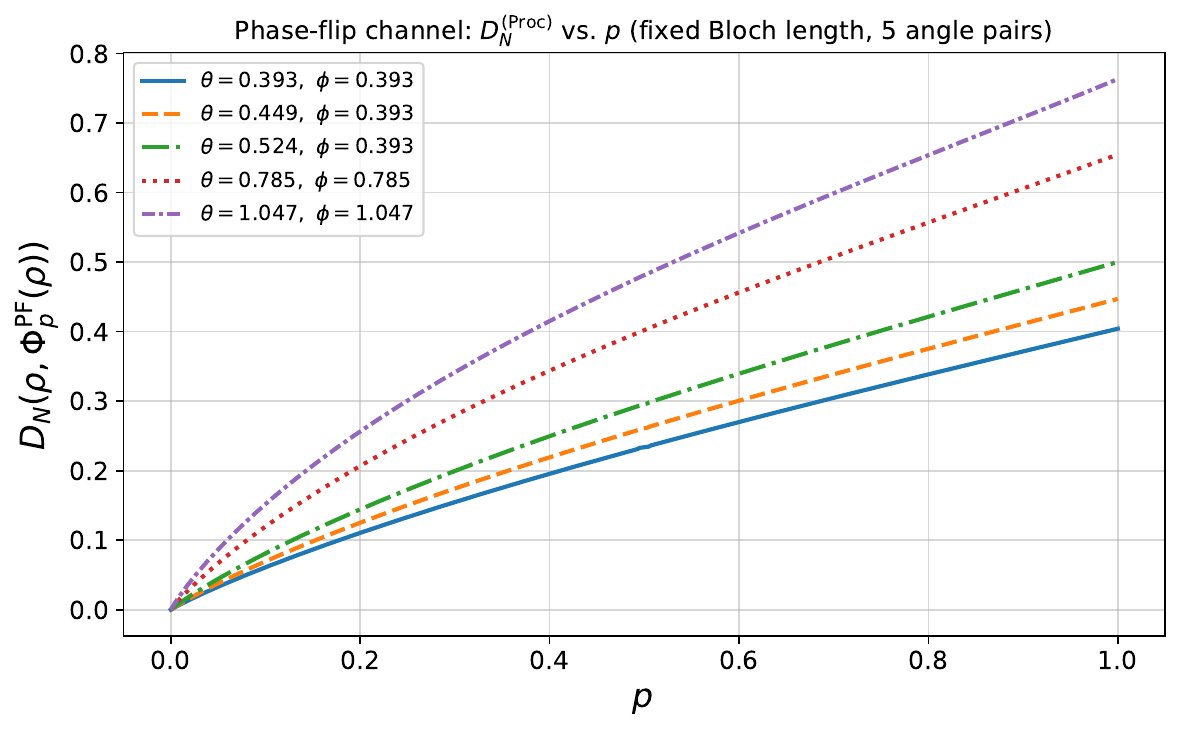}
\caption{The metric
\(D_N(\rho,\Phi_p^{\mathrm{PF}}(\rho))\) as a function of \(p\), for a
fixed Bloch-vector length and different orientations of the input
state.}
\label{fig:pf_DN_angles}
\end{figure}

The corresponding behavior of the misalignment angle is displayed in
Fig.~\ref{fig:pf_Theta_angles}. The angle vanishes for states aligned
with the symmetry axis of the channel, since those states remain fixed
by the phase-flip dynamics. Away from this axis, the anisotropic
contraction changes the Bloch direction and the Procrustes optimization
yields a nontrivial rotation. Thus, \(\Theta\) provides a geometric
indicator of the nonradial component of the phase-flip channel action.

\begin{figure}[t]
\centering
\includegraphics[width=0.9\linewidth]{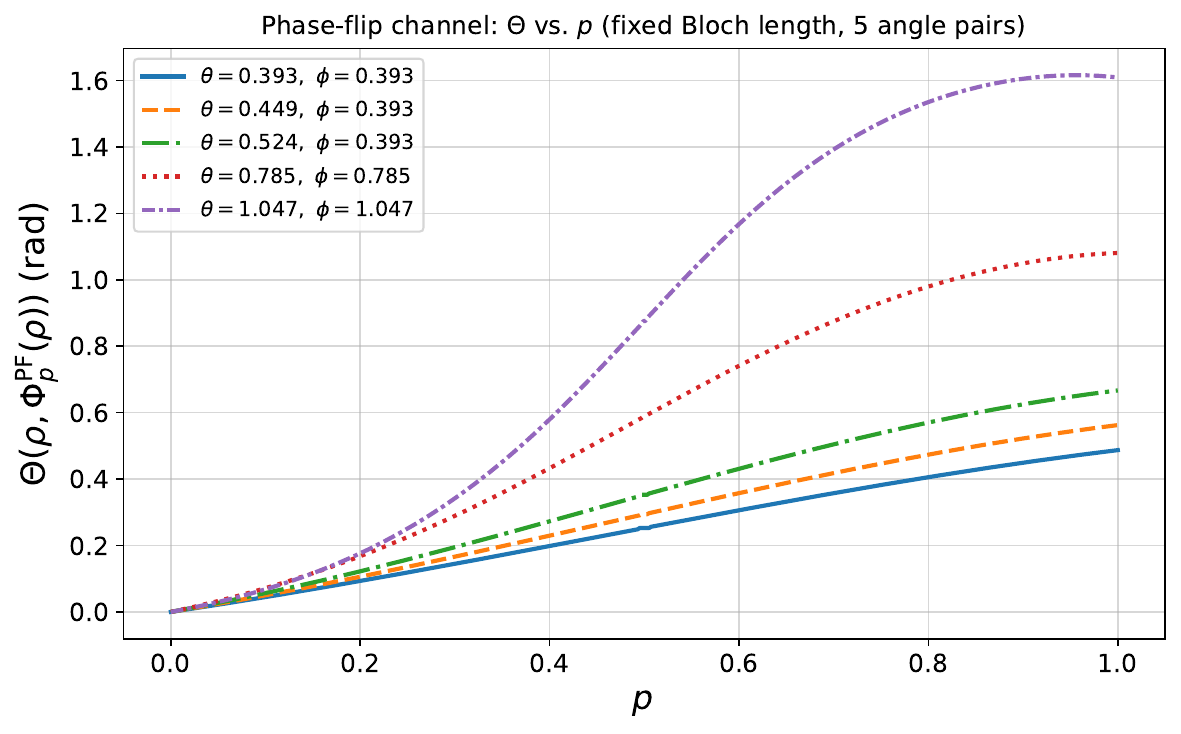}
\caption{Purification misalignment angle
\(\Theta(\rho,\Phi_p^{\mathrm{PF}}(\rho))\) as a function of \(p\), for
the same configurations as in Fig.~\ref{fig:pf_DN_angles}. The
orientation dependence reflects the axial symmetry of the phase-flip
channel and its anisotropic action on the Bloch ball.}
\label{fig:pf_Theta_angles}
\end{figure}

The results show that the metric $D_N$ and the Procrustes angle $\Theta$
capture complementary aspects of the phase-flip dynamics. The metric
\(D_N\) quantifies the magnitude of the state change produced by the
loss of coherence, while \(\Theta\) captures the geometric
reorientation of the optimal purification frames. In particular, for
states diagonal in the \(\sigma_z\) basis one has
\(\mathbf r'=\mathbf r\), so both the metric and the misalignment angle
vanish. For generic states, however, the transverse contraction induces
both scalar distinguishability and nontrivial purification-frame
misalignment. This makes the phase-flip channel a simple benchmark for
distinguishing radial changes from anisotropic geometric effects in the
Fano--Procrustes framework.


\subsection{Amplitude damping channel}
\label{ssec:amplitude_damping_numerical}

In the Bloch representation, the amplitude damping channel $\mathcal{A}_\gamma(\rho)$ acts as the affine transformation (cf.~Sec.~\ref{sssec:amplitude_damping_channel})

\begin{equation*}
\mathbf r=(r_x,r_y,r_z)
\longmapsto \mathbf r'
\end{equation*}

\noindent where

\begin{equation*}
\mathbf r'
=
\left(
\sqrt{1-\gamma}\,r_x,\,
\sqrt{1-\gamma}\,r_y,\,
\gamma+(1-\gamma)r_z
\right).
\end{equation*}

Unlike Pauli channels, amplitude damping is nonunital, i.e., it does not leave
the maximally mixed state fixed. The transverse components \(r_x\) and
\(r_y\) are contracted by the factor \(\sqrt{1-\gamma}\), while the
longitudinal component is both contracted and translated toward the
positive \(z\)-axis. In particular, the ground state \(\ket{0}\) is a
fixed point of the channel.

The amplitude damping channel is especially useful for testing the
Fano--Procrustes framework because it combines contraction with an
affine drift. For states diagonal in the energy basis, the input and
output Bloch vectors remain collinear and the Procrustes rotation is
trivial. For generic input states, however, the affine displacement
changes the direction of the Bloch vector, and the optimal purification
alignment can become nontrivial.

Figure~\ref{fig:ad_DN_radii} shows the metric $D_N$ as a function of
\(\gamma\), for a fixed pair of Bloch angles and several values of the
Bloch-vector length. The metric vanishes at \(\gamma=0\), where the
channel reduces to the identity map. As \(\gamma\) increases, the output
state moves toward the ground state and the value of \(D_N\) grows,
reflecting the increasing distinguishability between the input and
output states. The dependence on the Bloch radius indicates that states
with different purities respond differently to the dissipative
relaxation.

\begin{figure}[t]
\centering
\includegraphics[width=0.9\linewidth]{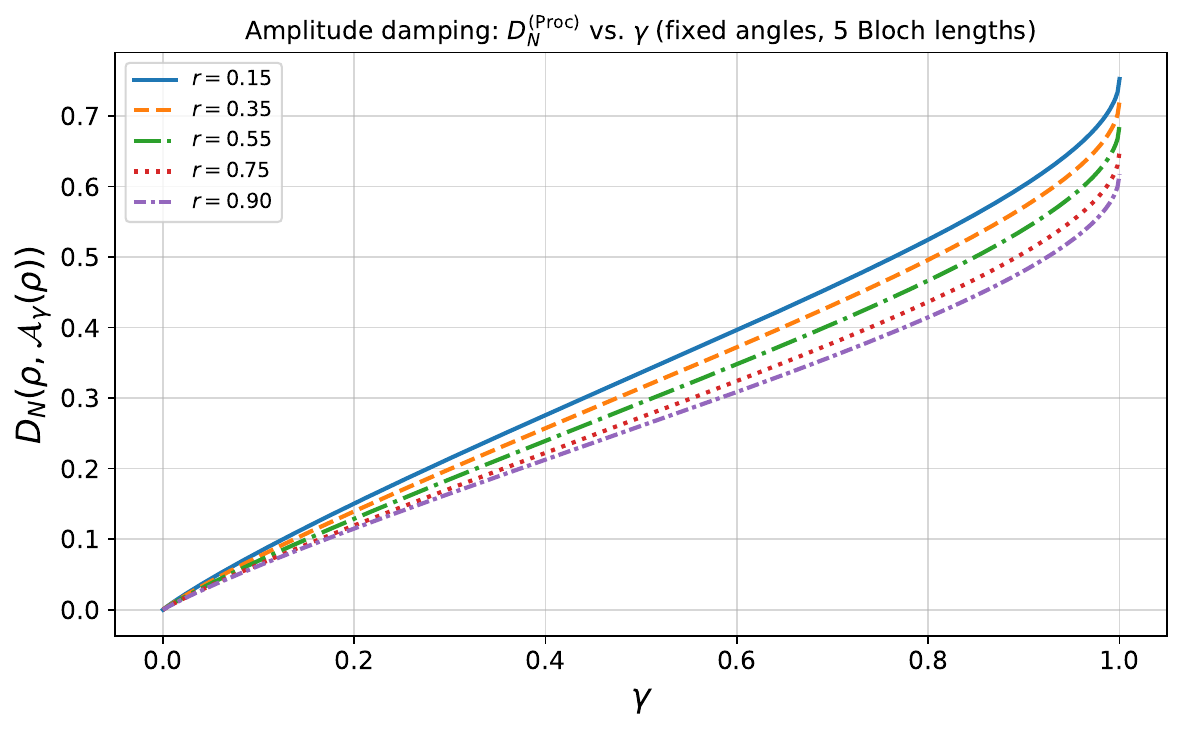}
\caption{The metric
\(D_N(\rho,\mathcal{A}_\gamma(\rho))\) as a function of the damping
parameter \(\gamma\), for fixed Bloch angles and different values of the
Bloch-vector length.}
\label{fig:ad_DN_radii}
\end{figure}

The corresponding purification misalignment angle is shown in
Fig.~\ref{fig:ad_Theta_radii}. Since amplitude damping is not purely
radial for generic input states, the output Bloch direction generally
differs from the input direction. This produces a nontrivial optimal
Procrustes rotation and hence a nonzero value of \(\Theta\). The angle
therefore captures the geometric component of the dissipative dynamics,
namely the purification-frame reorientation induced by the affine drift
and transverse contraction.

\begin{figure}[t]
\centering
\includegraphics[width=0.9\linewidth]{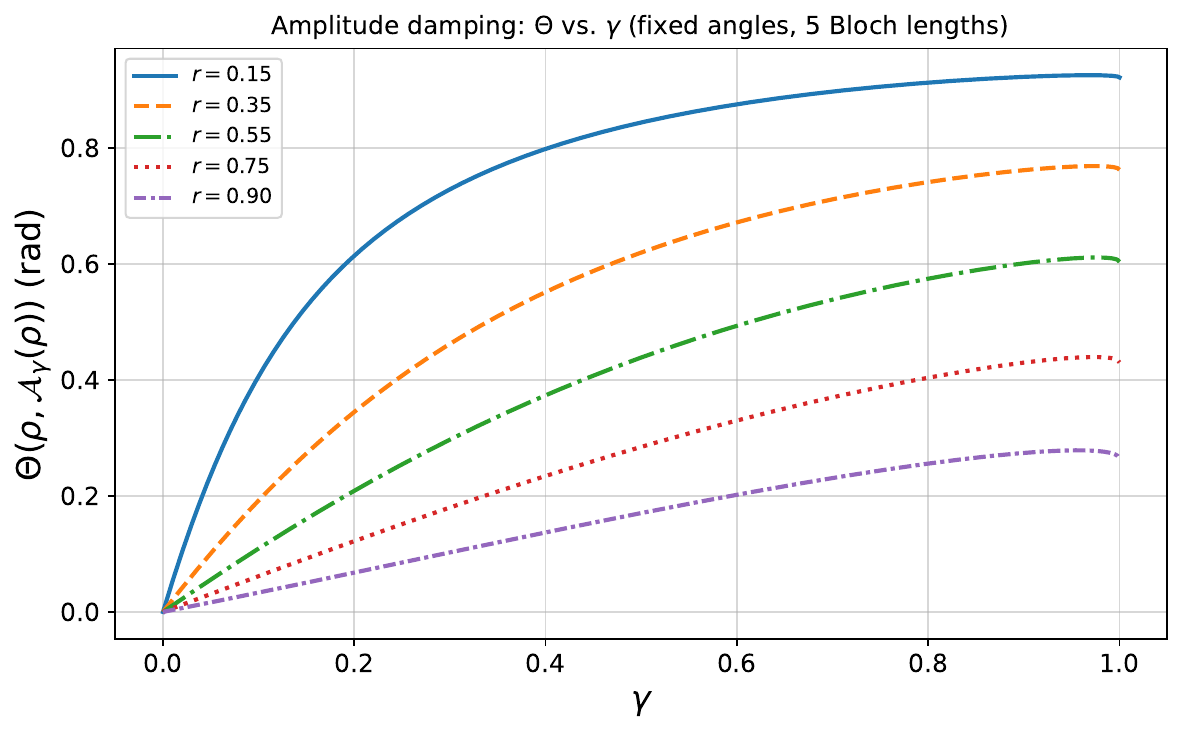}
\caption{Purification misalignment angle
\(\Theta(\rho,\mathcal{A}_\gamma(\rho))\) as a function of the damping
parameter \(\gamma\), for the same configurations as in
Fig.~\ref{fig:ad_DN_radii}. The angle $\Theta$ detects the geometric reorientation of the optimal purification frames induced by the
nonunital channel action.}
\label{fig:ad_Theta_radii}
\end{figure}

We next examine the dependence on the orientation of the input Bloch
vector. Figure~\ref{fig:ad_DN_angles} shows
\(D_N(\rho,\mathcal{A}_\gamma(\rho))\) for a fixed Bloch-vector length
and several choices of Bloch angles. Because amplitude damping singles
out the \(z\)-axis, the scalar distinguishability depends strongly on
the position of the input state relative to the energy basis. States
with different values of \(r_z\) experience different amounts of
longitudinal displacement and relaxation.

\begin{figure}[t]
\centering
\includegraphics[width=0.9\linewidth]{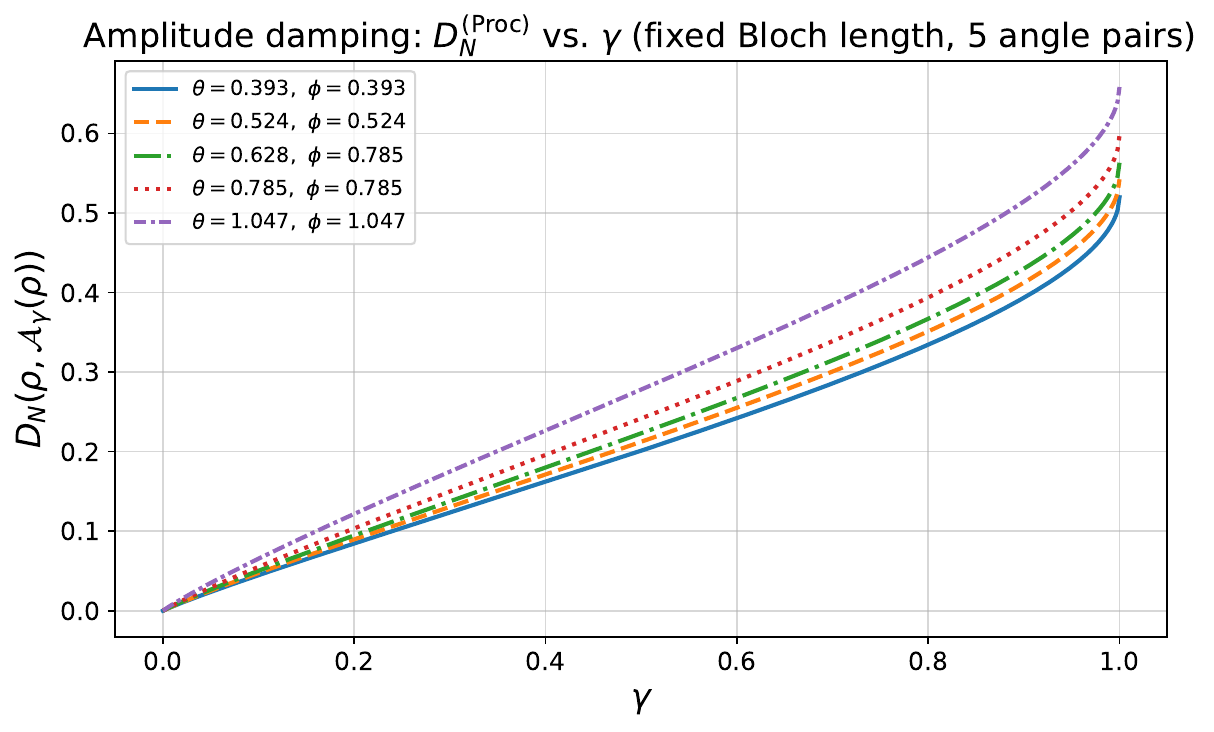}
\caption{The metric
\(D_N(\rho,\mathcal{A}_\gamma(\rho))\) as a function of \(\gamma\), for a
fixed Bloch-vector length and different orientations of the input
state.}
\label{fig:ad_DN_angles}
\end{figure}

The corresponding behavior of \(\Theta\) is displayed in
Fig.~\ref{fig:ad_Theta_angles}. The directional dependence is especially
relevant for this channel, i.e., when the input state is aligned with the
\(z\)-axis, the evolution remains collinear and the misalignment angle
vanishes. Away from the damping axis, however, the channel changes the
direction of the Bloch vector, and the Procrustes optimization yields a
nontrivial rotation. Thus \(\Theta\) provides a direct geometric
indicator of the nonradial part of the amplitude damping dynamics.

\begin{figure}[t]
\centering
\includegraphics[width=0.9\linewidth]{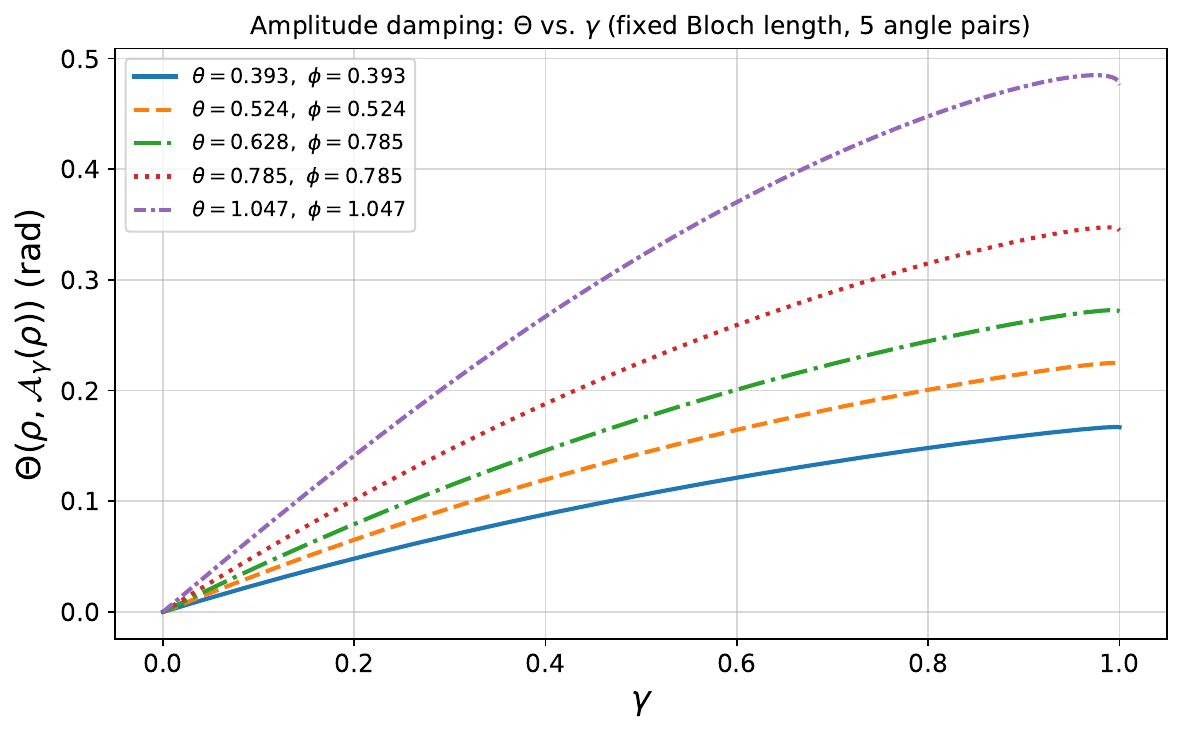}
\caption{Purification misalignment angle
\(\Theta(\rho,\mathcal{A}_\gamma(\rho))\) as a function of \(\gamma\),
for the same configurations as in Fig.~\ref{fig:ad_DN_angles}. The
orientation dependence reflects the fact that amplitude damping selects
a preferred axis in the Bloch ball.}
\label{fig:ad_Theta_angles}
\end{figure}

The results show that the metric $D_N$ and the Procrustes angle $\Theta$
highlight different aspects of the amplitude damping process. The metric
\(D_N\) quantifies the magnitude of the dissipative change, whereas
\(\Theta\) captures the geometric reorientation of the optimal
purification frames. For states diagonal in the energy basis, the
channel acts along a fixed Bloch direction and \(\Theta=0\). For generic
states, the affine drift toward the ground state produces a nonzero
misalignment angle. In this sense, the Fano--Procrustes framework
separates the scalar effect of relaxation from the geometric effect of
direction-changing dynamics.

\subsection{Imperfect quantum NOT gate}
\label{ssec:imperfect_not_numerical}

In the Bloch representation, the imperfect quantum NOT gate $\Phi^{\mathrm{NOT}}_{p}(\rho)$ 
acts on the Bloch vector as (cf. Sec.~\ref{ssec:imperfect_not})

\[
\mathbf r = (r_x,r_y,r_z)  \longmapsto \mathbf r'
\]

where 

\begin{equation*}
\mathbf{r}' =
\begin{pmatrix}
r_x \\
-\left[ (1-p) + p\cos(\delta\alpha) \right] r_y + p \sin(\delta\alpha)\, r_z \\
-\left[ (1-p) + p\cos(\delta\alpha) \right] r_z - p \sin(\delta\alpha)\, r_y
\end{pmatrix}.
\end{equation*}

This channel (cf.~Eq.~\eqref{eq:imperfect_not_channel} in Sec.~\ref{ssec:imperfect_not}) represents a random unitary
process arising from control imperfections in the implementation of a
$\pi$ pulse. Such errors are common, for instance, in Nuclear Magnetic
Resonance (NMR) experiments, where radiofrequency pulses are used to
manipulate spin-$1/2$ systems. If the applied pulse deviates slightly
from the ideal $\pi$ rotation, the resulting state will not be perfectly
inverted and may acquire additional coherences.

\medskip

\noindent
This model provides a simple yet realistic description of coherent
control errors and will be useful for analyzing the behavior of the
metric $D_N(\rho,\sigma)$ and the misalignment angle $\Theta(\rho,\sigma)$ under imperfect
unitary operations.

Figure~\ref{fig:not_error_DN_radii} shows the metric
\(D_N(\sigma_x\rho\sigma_x,\Phi^{\mathrm{NOT}}_{p}(\rho))\) as a function of the
noise parameter \(p\), for a fixed pair of Bloch angles and several
values of the Bloch-vector length. The corresponding misalignment angle \(\Theta(\sigma_x\rho\sigma_x,\Phi^{\mathrm{NOT}}_{p}(\rho))\) is shown in 
Figure~\ref{fig:not_error_Theta_radii}. In this figures, the actual output of the imperfect NOT gate is
compared with the ideal NOT output. In this case the quantities vanish
when \(p=0\), since the channel coincides with the ideal NOT operation.
As \(p\) increases, the distance
\(D_N(\sigma_x\rho\sigma_x,\Phi^{\mathrm{NOT}}_{p}(\rho))\)
measures the scalar loss of agreement with the target output, while
\(\Theta(\sigma_x\rho\sigma_x,\Phi^{\mathrm{NOT}}_{p}(\rho))\)
measures the geometric rotation required to align the corresponding
optimal purifications.

\begin{figure}[t]
\centering
\includegraphics[width=0.9\linewidth]{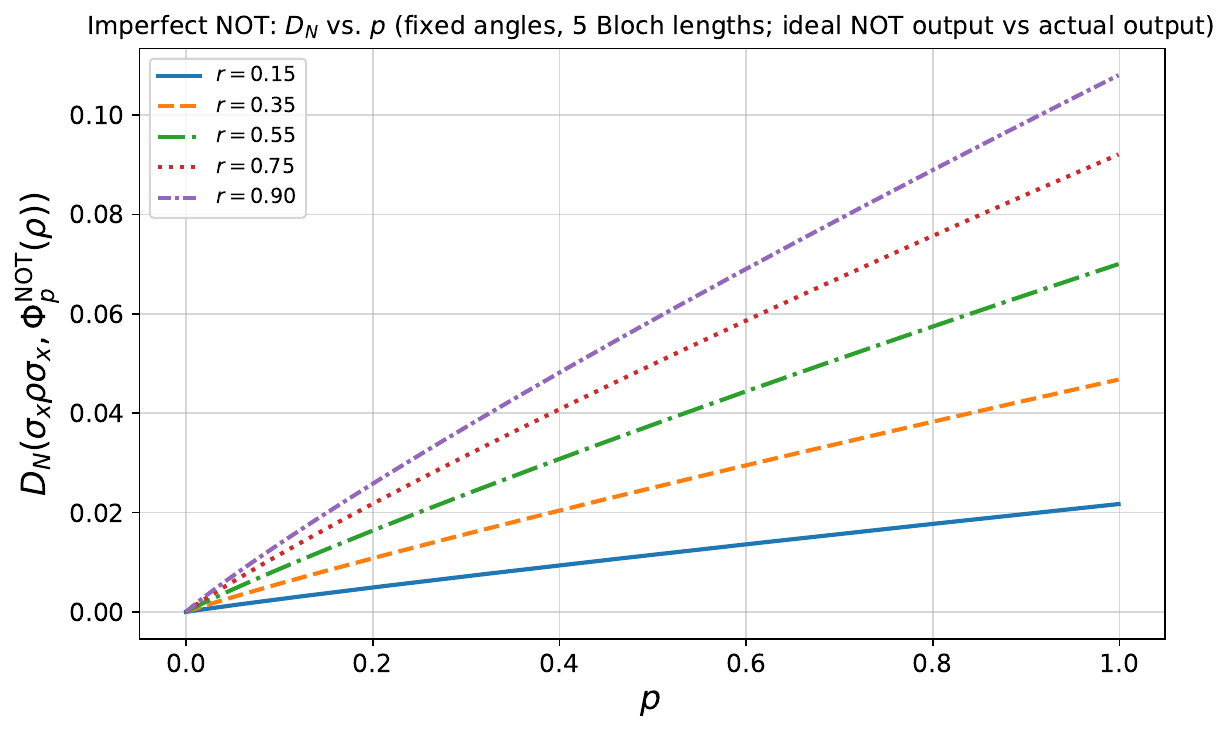}
\caption{The metric
\(D_N(\sigma_x\rho\sigma_x,\Phi^{\mathrm{NOT}}_{p}(\rho))\) as a
function of \(p\), for fixed Bloch angles and different Bloch-vector
lengths. This quantity isolates the deviation of the imperfect NOT
channel from the ideal NOT gate.}
\label{fig:not_error_DN_radii}
\end{figure}

\begin{figure}[t]
\centering
\includegraphics[width=0.9\linewidth]{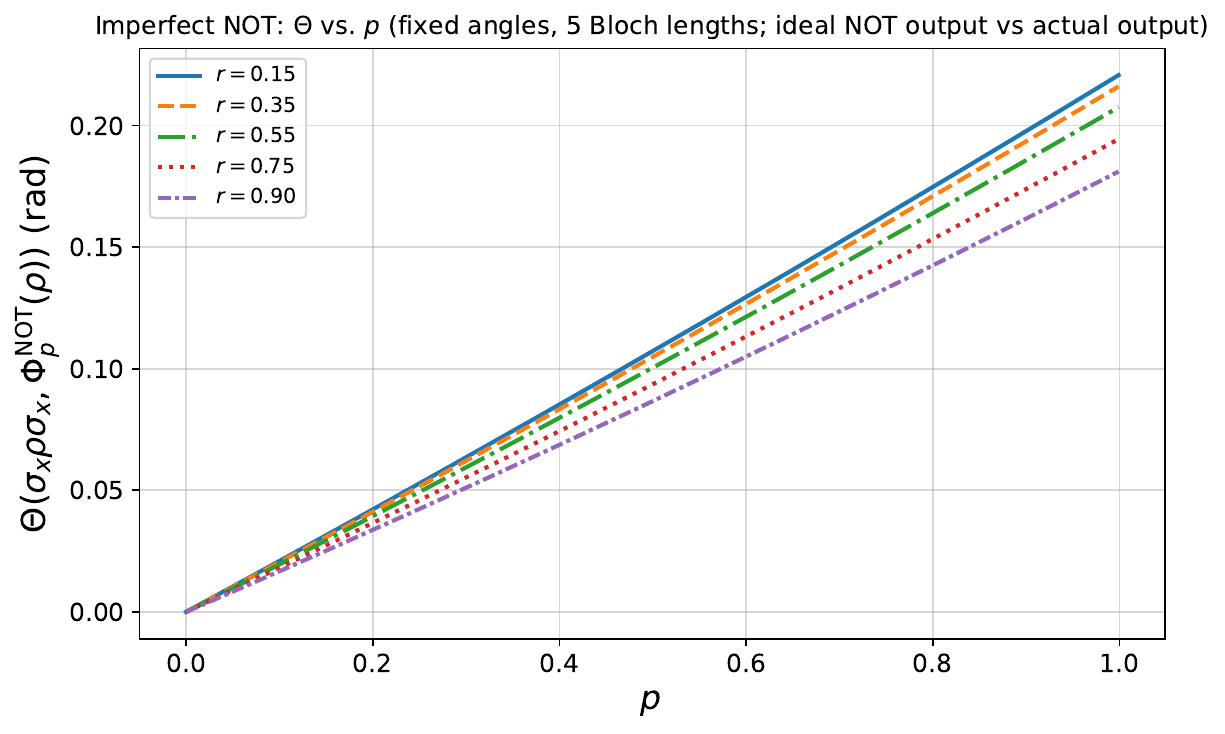}
\caption{Purification misalignment angle
\(\Theta(\sigma_x\rho\sigma_x,\Phi^{\mathrm{NOT}}_{p}(\rho))\) as a
function of \(p\), for the same configurations as in
Fig.~\ref{fig:not_error_DN_radii}. This angle provides a geometric
diagnostic of the coherent pulse-angle error relative to the target NOT
operation.}
\label{fig:not_error_Theta_radii}
\end{figure}

We also analyze the dependence on the orientation of the input state.
For a fixed Bloch-vector length and several choices of the Bloch angles,
Fig.~\ref{fig:not_error_DN_angles} shows the metric \(D_N(\sigma_x\rho\sigma_x,\Phi^{\mathrm{NOT}}_{p}(\rho))\),
 while Fig.~\ref{fig:not_error_Theta_angles} shows the corresponding
misalignment angle \(\Theta(\sigma_x\rho\sigma_x,\Phi^{\mathrm{NOT}}_{p}(\rho))\). 
These plots separate the effect of the intended
NOT operation from the additional error due to the imperfect pulse. This
comparison is especially useful for characterizing gate performance,
because it directly measures the deviation between the implemented
operation and the desired target transformation.

\begin{figure}[t]
\centering
\includegraphics[width=0.9\linewidth]{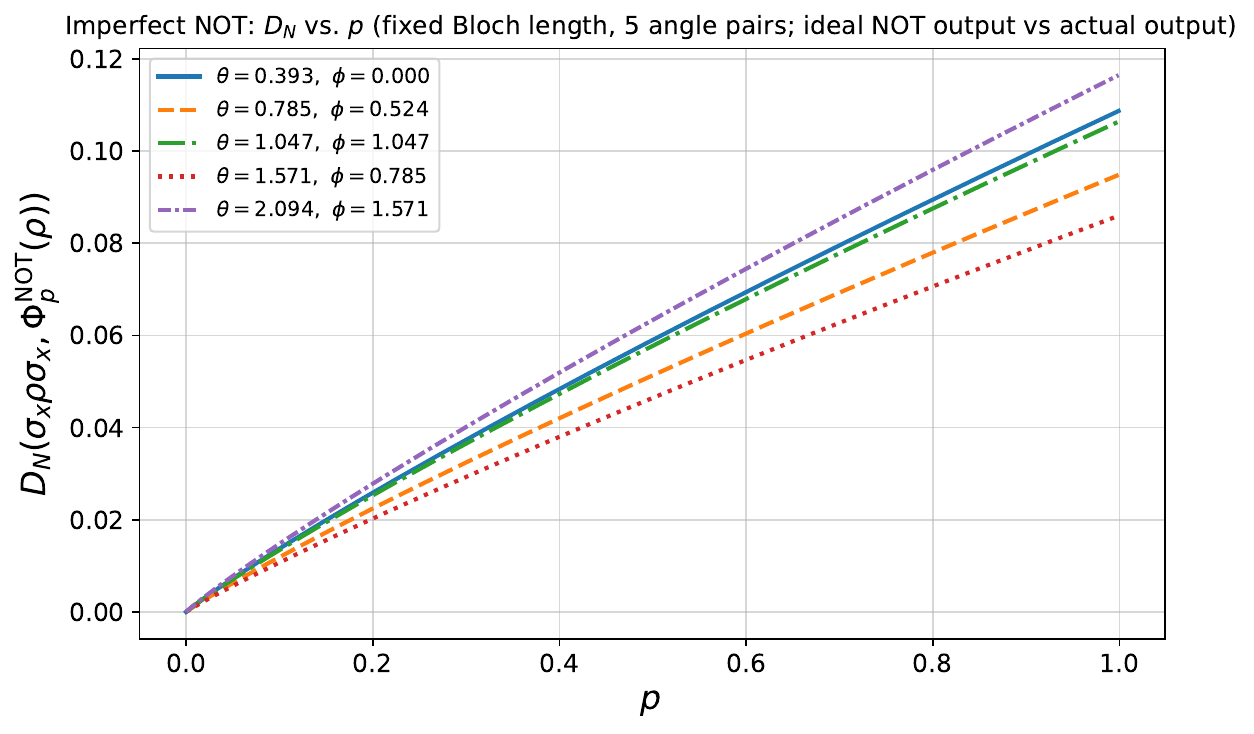}
\caption{The metric
\(D_N(\sigma_x\rho\sigma_x,\Phi^{\mathrm{NOT}}_{p}(\rho))\) as a
function of \(p\), for fixed Bloch-vector length and different
input-state orientations.}
\label{fig:not_error_DN_angles}
\end{figure}

\begin{figure}[t]
\centering
\includegraphics[width=0.9\linewidth]{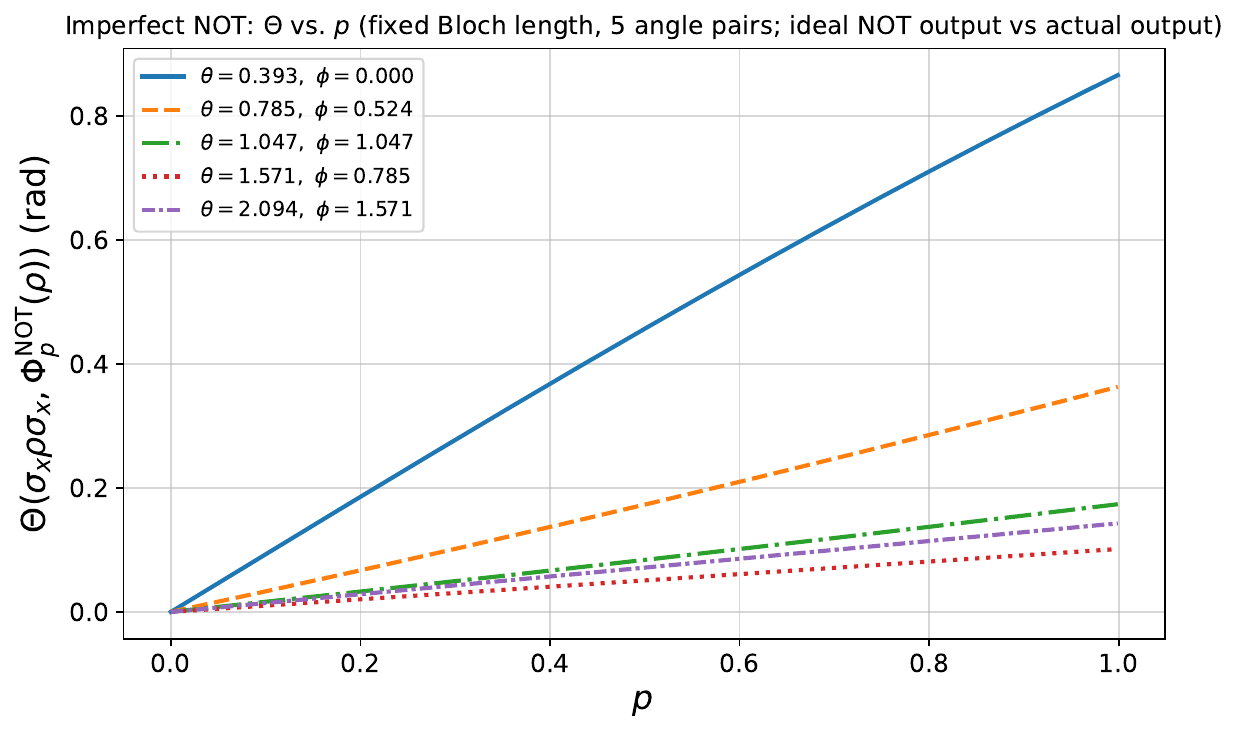}
\caption{Purification misalignment angle
\(\Theta(\sigma_x\rho\sigma_x,\Phi^{\mathrm{NOT}}_{p}(\rho))\) as a
function of \(p\), for the same configurations as in
Fig.~\ref{fig:not_error_DN_angles}. This quantity captures the
orientation-dependent geometric error induced by the miscalibrated NOT
pulse.}
\label{fig:not_error_Theta_angles}
\end{figure}

The results show that the metric $D_N$ and the misalignment angle $\Theta$
provide complementary information. The metric \(D_N\) quantifies the
magnitude of the state or gate-output deviation, while \(\Theta\)
characterizes the geometric reorientation of the optimal purification
frames. For the imperfect NOT channel, this distinction is physically
meaningful, i.e., the noise is not purely dissipative, but contains a coherent
rotation error associated with the imperfect pulse. The Fano--Procrustes
framework therefore gives a refined state-dependent characterization of
the gate, separating scalar distinguishability from geometric
misalignment.


\section{Analysis and discussion}
\label{sec:analysis_discussion}

\subsection{Fano--Procrustes approach}

The results presented in this work show that the Fano--Procrustes construction
provides a useful way of separating two different aspects of the
comparison between quantum states. The first aspect is scalar distinguishability,
which is quantified in this case by the metric \(D_N\). The second
is geometric alignment, quantified by the
optimal Procrustes rotation \(S_\star\) and by its associated angle
\(\Theta\). The central point is that these two quantities address different
questions. The metric \(D_N\) measures how close two states can
be made at the level of optimal purification overlap, whereas
\(\Theta\) describes how the corresponding purification frames must be
rotated in order to attain that optimal overlap.

This distinction is important because the scalar metric is determined
only by the value of the maximized overlap. Equivalently, it is directly
related to the fidelity between the two mixed states. Therefore, once
the maximal overlap \(g_\star\) is known, the value of \(D_N\) is fixed.
The Procrustes formulation reproduces this scalar information, but it
also gives access to the optimizer \(S_\star\). The angle \(\Theta\) is
extracted from this optimizer and therefore contains information about
the structure of the maximizing purification, rather than only about the
value of the maximum. In this sense, the Fano--Procrustes framework
upgrades the fidelity-based comparison from a scalar optimization
problem to a structured geometric problem consisting of both an optimal
value and an optimal transformation.

The numerical agreement between the metric \(D_N\) computed from the
Procrustes procedure and the same metric computed from the Uhlmann
fidelity served as a consistency check of the method.

The depolarizing channel provides the cleanest benchmark for this
interpretation. Since the channel action is purely radial, it changes
only the length of the Bloch vector, leaving its direction unchanged.
Consequently, the input and output Bloch vectors are always parallel,
the canonical Procrustes matrix is
diagonal in the adapted frame, and the optimal rotation is
\(S_\star=\I_3\). This explains why the misalignment angle vanishes for
all values of \(p\), while the metric $D_N$ remains nonzero whenever
the purity of the state is changed. The depolarizing channel thus
illustrates the separation between radial distinguishability and
geometric misalignment, i.e., \(D_N\) detects the contraction of the Bloch
vector, whereas \(\Theta\) correctly indicates the absence of
purification-frame twisting.

The bit-flip and phase-flip channels display a different behavior. Both
channels are unital, but they are anisotropic, i.e., each one preserves one
axis of the Bloch ball while contracting the transverse plane. For the
bit-flip channel, the \(x\)-component is invariant and the \(y\) and
\(z\) components are multiplied by \((1-2p)\). For the phase-flip channel,
the \(z\)-component is invariant and the transverse components are
multiplied by the same factor. This anisotropy implies that
axis-adapted states evolve collinearly and therefore have
\(\Theta=0\), while generic input states experience a change in Bloch
direction and consequently acquire a nontrivial Procrustes angle \(\Theta\).

The behavior of \(D_N\) for these channels reflects the total scalar
change in the state, including the loss or reversal of transverse
components. By contrast, the behavior of \(\Theta\) is controlled by
whether the channel changes the direction of the Bloch vector relative
to the chosen canonical frame. This explains the strong dependence of
\(\Theta\) on the input orientation. States close to the symmetry axis
of the channel show little or no misalignment, whereas states with
substantial transverse components generally require a nontrivial
optimal rotation. Thus, the bit-flip and phase-flip examples show that
\(\Theta\) is sensitive to anisotropic channel action, even when the
scalar metric alone does not explicitly identify the directional
origin of the change.

The amplitude damping channel adds another important feature, i.e., it is
nonunital. In Bloch form, it contracts the transverse components and
shifts the longitudinal component toward the ground state. Therefore,
the output Bloch vector is not obtained merely by an anisotropic
contraction about the origin, but also by an affine displacement. For
states diagonal in the energy basis, the input and output remain
collinear, so the Procrustes rotation is trivial and
\(\Theta=0\). For generic states, however, the affine drift changes the
direction of the Bloch vector. The resulting nonzero values of
\(\Theta\) capture the geometric contribution of this nonradial action.

This makes amplitude damping a particularly useful example for the
present framework. The metric \(D_N\) quantifies the total
effect of dissipative relaxation on the state. The angle \(\Theta\), on
the other hand, identifies whether this relaxation includes a geometric
reorientation of the purification frame. Hence, for amplitude damping,
the pair \((D_N,\Theta)\) separates the radial effect associated with
purity change from the directional effect associated with the affine
drift toward the fixed point of the channel.

The imperfect quantum NOT gate provides a complementary example because
its noise is not purely dissipative. Instead, it models a coherent
control error, i.e., with probability \((1-p)\) the ideal NOT operation is
applied, while with probability \(p\) the system undergoes a slightly
miscalibrated rotation about the same axis. The corresponding Bloch map
leaves the \(x\)-component invariant but mixes the \(y\) and \(z\)
components. This produces both scalar distinguishability and geometric
misalignment for generic input states.

The results for the imperfect NOT gate also illustrate the usefulness
of the geometric information contained in \(\Theta\). The metric $D_N$
measures how far the implemented output is from either the input state
or the target output. The angle
\(\Theta\) indicates whether the deviation involves a nontrivial
rotation of the optimal purification frame. Since the error in this
model is coherent and directional, \(\Theta\) provides information that
is naturally aligned with the physical origin of the imperfection. It
therefore complements the scalar distance by identifying the geometric
component of the gate error.

Across all examples, a common pattern emerges. When a channel acts
radially in the Bloch ball, changing only the norm of the Bloch vector,
the metric $D_N$ may be nonzero but the Procrustes angle \(\Theta\) vanishes.
When the channel changes the Bloch direction, either through anisotropic
contraction, affine drift, or coherent rotation error, the Procrustes
angle generally becomes nonzero. This supports the interpretation of
\(\Theta\) as a diagnostic of nonradial channel action in purification
space.

The role of degeneracies should also be kept in mind. In highly
symmetric situations, such as collinear Bloch-vector families, the
Procrustes problem can have degeneracies because the singular values of
the matrix \(K\) may coincide. These degeneracies do not affect the
 metric $D_N$, since all optimal rotations yield the same maximal
overlap. In many symmetry-adapted cases they are also benign for
\(\Theta\), because all relevant optimal choices lead to the same trace
of \(S_\star\), and hence to the same angle.

\subsection{Operational realization of the optimal purification}
\label{ssec:rodrigues_operational}

The geometric freedom underlying the Fano--Procrustes parametrization is
closely related to the operational purification scheme discussed by
Bassi and Ghirardi~\cite{Bassi2003}. In their approach, a purified
bipartite state is regarded as a physical resource from which different
ensemble decompositions of a given mixed state can be generated by
performing suitable measurements on the ancillary system. In the present
qubit setting, the ancilla-side freedom is represented
geometrically by rotations acting on the Fano variables of the purification.

The Procrustes optimization determines the optimal proper
rotation \(S_\star\in SO(3)\) from the singular value decomposition of
the Procrustes matrix \(K\). Its
axis--angle representation is

\begin{equation*}
S_\star
=
R(\hat{\mathbf u}_\star,\theta_\star),
\end{equation*}

\noindent where \(\hat{\mathbf u}_\star\in\mathbb R^3\) is a unit vector and
\(\theta_\star\in[0,\pi]\). In Rodrigues form~\cite{Rodrigues1840,MurrayLiSastry1994},

\begin{equation*}
R(\hat{\mathbf u}_\star,\theta_\star)
=
\I_3
+
\sin\theta_\star\,[\hat{\mathbf u}_\star]_\times
+
(1-\cos\theta_\star)\,[\hat{\mathbf u}_\star]_\times^2,
\end{equation*}

\noindent with

\begin{equation*}
[\hat{\mathbf u}_\star]_\times
=
\begin{pmatrix}
0 & -u_{\star,z} & u_{\star,y}\\
u_{\star,z} & 0 & -u_{\star,x}\\
-u_{\star,y} & u_{\star,x} & 0
\end{pmatrix}.
\end{equation*}

The rotation angle is obtained from

\begin{equation*}
\theta_\star
=
\arccos\!\left(
\frac{\Tr(S_\star)-1}{2}
\right).
\end{equation*}

For \(0<\theta_\star<\pi\), the corresponding unit axis is obtained from
the antisymmetric part of \(S_\star\):

\begin{equation}
\hat{\mathbf u}_\star
=
\frac{1}{2\sin\theta_\star}
\begin{pmatrix}
(S_\star)_{32}-(S_\star)_{23}\\
(S_\star)_{13}-(S_\star)_{31}\\
(S_\star)_{21}-(S_\star)_{12}
\end{pmatrix}.
\label{eq:axis_star_from_Sstar}
\end{equation}

The ancilla unitary implementing the optimal purification is the
\(SU(2)\) lift of this rotation. With the convention

\begin{equation*}
U_\star^\dagger
(\mathbf v\cdot\boldsymbol{\sigma})
U_\star
=
(S_\star\mathbf v)\cdot\boldsymbol{\sigma},
\qquad
\mathbf v\in\mathbb R^3,
\end{equation*}

\noindent one may choose

\begin{equation*}
U_\star
=
\exp\!\left[
-\frac{i}{2}\theta_\star\,
\hat{\mathbf u}_\star\cdot\boldsymbol{\sigma}
\right],
\end{equation*}

\noindent or equivalently

\begin{equation*}
U_\star
=
\cos\frac{\theta_\star}{2}\,\I
-
i\sin\frac{\theta_\star}{2}\,
\hat{\mathbf u}_\star\cdot\boldsymbol{\sigma}.
\end{equation*}

Applying this unitary to the ancillary subsystem of a reference
purification gives the Procrustes-optimal purification,

\begin{equation*}
|\Psi_\sigma^{(\star)}\rangle
=
(\I\otimes U_\star)
|\Psi_\sigma^{(e)}\rangle .
\end{equation*}

The limiting cases require separate but simple treatments. If
\(\theta_\star=0\), then \(S_\star=\I_3\), and one may choose

\begin{equation*}
U_\star=\I.
\end{equation*}

If \(\theta_\star=\pi\), Eq.~\eqref{eq:axis_star_from_Sstar} is singular.
In that case, the axis \(\hat{\mathbf u}_\star\) is obtained as the
normalized eigenvector of \(S_\star\) with eigenvalue \(+1\). The
corresponding unitary may be chosen as

\begin{equation*}
U_\star
=
-i\,\hat{\mathbf u}_\star\cdot\boldsymbol{\sigma},
\end{equation*}

\noindent up to an irrelevant overall sign.

Finally, the lift from \(SO(3)\) to \(SU(2)\) is two-to-one, i.e., both
\(U_\star\) and \(-U_\star\) generate the same rotation \(S_\star\) and
therefore the same transformed purification projector. Thus the
Rodrigues representation provides the practical link between the
geometric output of the Procrustes algorithm and the ancilla-side
unitary operation needed to realize the optimal purification.
This establishes a direct connection between the Fano--Procrustes
optimization and the operational purification viewpoint. Indeed, in the qubit
setting, the freedom to act on the ancilla generates the family of
purifications, while the Procrustes solution selects the ancilla
operation that produces the overlap-optimal purification within the
canonical gauge.

\section{Concluding remarks}
\label{concluding_remarks}

In this work we reformulated the Uhlmann
purification-overlap optimization and developed a purification-based
geometric framework for the analysis of mixed qubit states and qubit
channels, combining the entropic metric \(D_N\) with an explicit
optimization over purifications written in Fano form. The central idea
of the paper is that, for one-qubit mixed states, the freedom in the
choice of purification can be represented in a finite-dimensional and
geometrically transparent way, so that the optimization problem
underlying the metric \(D_N\) can be reduced to an orthogonal
Procrustes problem on \(SO(3)\).

The \(D_N\) metric adopted in this work was originally defined for pure
states and later extended to mixed states through an optimization over
all purifications. By Uhlmann's theorem, this
optimization is equivalent to maximizing the overlap between purifications,
which in turn is equivalent to the square root of the Uhlmann fidelity.
Although this already provides a closed formula for the scalar value of the
metric, the main purpose of the present work has not been limited to the
evaluation of \(D_N\) itself. Rather, our goal has been to analyze the
structure of the optimizing purifications and to extract from that
optimization additional geometric information.

To that end, we adopted the Fano representation of two-qubit pure states,
expressing a purification in terms of the Bloch vector \(\mathbf r\) of the
reduced state, the ancillary Bloch vector \(\boldsymbol{\gamma}\), and a
real \(3\times 3\) matrix \(A\) collecting the correlations between Pauli
observables on the system and the ancilla. The purity of the bipartite
state imposes strong algebraic restrictions on these quantities, in
particular on the matrix \(A\). These restrictions show that, for a fixed
mixed qubit state, the allowed purifications form a structured family that
can be generated by right multiplication of a reference matrix by elements
of \(SO(3)\), together with the corresponding induced transformation of the
ancilla Bloch vector.

This observation leads directly to the key reduction of the paper. The
overlap between two purifications can be written explicitly in terms of
their Fano parameters, and the optimization over purifications can be recast as
the maximization of \(\Tr(KS)\) over \(S\in SO(3)\), where \(K\) is a
matrix constructed from the correlation matrices and ancillary Bloch
vectors of the two states. In other words, the optimization over
purifications becomes an orthogonal Procrustes problem. This is
a central result of the work, i.e., a formally infinite-dimensional
optimization over purifications is replaced by an explicit and tractable
optimization over proper rotations.

A further contribution of the paper is the construction of a canonical
gauge adapted to the qubit case. By introducing a convenient reference
frame tied to the Bloch vector of the state, we obtained explicit forms for
the purification data and for the Procrustes matrix \(K\). This allowed us
to derive compact expressions for the optimization problem, and to make precise the geometric meaning of the
ancilla-side freedom. In this canonical gauge, the orthogonal matrix
entering the Procrustes optimization plays the role of the Bloch-sphere
representation of a unitary acting on the ancilla qubit.

This geometric construction naturally led to the definition of the
purification misalignment angle \(\Theta\), extracted from the optimal
Procrustes rotation \(S_\star\). While \(D_N\) depends only
on the optimized overlap and therefore captures the scalar
distinguishability between the states, \(\Theta\) retains directional
information about how the corresponding optimal purification frames are
related. In this sense, \(\Theta\) provides a geometric indicator of the
ancilla-side realignment required to achieve the maximal overlap between
purifications.

We also investigated several analytically tractable classes of quantum
channels. For depolarizing, bit-flip, phase-flip, and amplitude-damping
channels restricted to symmetry-adapted families of input states, the input
and output Bloch vectors remain collinear. In these cases, the Procrustes
optimization becomes trivial, the optimal rotation is the identity, and the
misalignment angle \(\Theta\) vanishes. The overlap and the corresponding 
metric $D_N$ can then be computed in closed form. These families provide useful
benchmark cases, showing that the framework reproduces the expected
behavior in situations where the dynamics is purely radial or preserves the
relevant Bloch direction.

The numerical analysis for more general channels complements the analytical
results by showing how nontrivial values of \(\Theta\) arise when the
channel action changes the direction of the Bloch vector. In particular,
the treatment of the imperfect quantum NOT gate illustrates how the
present framework can be used to study coherent control errors. In that
case, the Procrustes angle captures the geometric mismatch between the
optimal purification frames, while the metric $D_N$ quantifies the
corresponding loss of overlap. This makes the pair \((D_N,\Theta)\)
especially suggestive as a state-dependent diagnostic of channel action, i.e., 
the first quantity measures the magnitude of the change, while the second
reveals whether that change includes a nontrivial geometric reorientation
in purification space.

Another result of conceptual relevance is the operational interpretation of
the Procrustes optimization. Because every proper rotation in \(SO(3)\)
arises as the Bloch-sphere representation of a unitary in \(SU(2)\), the
optimal Procrustes rotation \(S_\star\) can be lifted to a unitary
\(U_\star\) acting on the ancilla qubit. Consequently, once a reference
purification is prepared, the optimal purification singled out by the
Fano--Procrustes construction can be physically generated by a local
ancilla-side unitary transformation. In this way, the optimization over
purifications is not merely an abstract mathematical procedure, i.e., it can be
understood as selecting the ancilla operation that produces the
overlap-optimal purification. This provides a direct bridge between the
present geometric framework and operational purification schemes based on
ancilla interactions and measurements \cite{Bassi2003}.

Taken together, these results show that the Fano--Procrustes approach offers
a unified perspective on purification geometry, metric distinguishability,
and channel analysis in the qubit setting. The framework makes explicit the
ancilla-side structure hidden behind the overlap optimization, provides a
tractable method to construct the optimal purification, and reveals
geometric information beyond that contained in a scalar metric alone. It
also suggests a natural separation between radial and nonradial aspects of
quantum channel action, with \(D_N\) quantifying the former and
\(\Theta\) offering information about the latter.

There are several natural directions for further work. One is the extension
of the present construction beyond qubits, where the purification freedom
is no longer described by simple rotations and the correlation structure
becomes significantly richer. Another is the systematic study of the
relation between the present geometric quantities and other state-space
metrics or channel diagnostics, particularly those arising in quantum
information geometry and quantum control. Finally, the operational
interpretation of the Procrustes optimization suggests possible
applications in ancilla-assisted state engineering, remote ensemble
preparation, and noise characterization, where the optimal purification may
have direct physical relevance beyond its role in the definition of the
 metric $D_N$.
 
It is important to emphasize that the present construction is not tied exclusively to the metric
\(D_N\). Any distance or distinguishability measure that is a function
of the Uhlmann fidelity, or equivalently of the maximal purification
overlap, can be evaluated using the same Procrustes optimization. In
such cases, the scalar distance is obtained by applying the appropriate
function to \(g_\star\). The additional advantage of the
Fano--Procrustes formulation is that it also identifies the optimizing
rotation \(S_\star\). Therefore, while different fidelity-based
distances provide different scalar measures of distinguishability, the
quantities \(S_\star\) and \(\Theta\) provide a complementary geometric
description of how the optimal purifications are aligned. This geometric feature is invisible
to purely fidelity-based quantities.

\begin{acknowledgments}
The author is grateful to Dr.~Pedro W.~Lamberti for numerous
enlightening discussions related to this work. The author also
acknowledges financial support from CONICET (PIP No.~135/23) and from
the Secretaría de Ciencia y Técnica de la Universidad Nacional de
Córdoba (SeCyT--UNC, Argentina, Proyecto Consolidar 2023--2025,
Director: Pedro W.~Lamberti).
\end{acknowledgments}

\bigskip

\appendix

\section{Uniqueness of $g_\star$}

\begin{lemma}[Uniqueness of the optimal overlap]
Let
\[
g^2(S)=\frac14\bigl(1+\mathbf r\cdot \mathbf s+\Tr(KS)\bigr),
\qquad S\in SO(3),
\]
and define
\[
g_\star=\max_{S\in SO(3)} g(S).
\]
Then the value \(g_\star\) is uniquely determined.
\end{lemma}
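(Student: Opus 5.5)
The lemma asks for two things: that the maximum exists, and that it does not depend on any of the arbitrary choices made in setting up the problem. These choices are the reference purifications $(A,\boldsymbol\gamma)$ and $(\widetilde B,\widetilde{\boldsymbol\delta})$, the convention for $O(\mathbf n)$ in the canonical gauge, and the particular SVD of $K$. I will establish existence first and then show invariance under each source of ambiguity.

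\emph{Existence.} The map $S\mapsto \Tr(KS)$ is continuous, and $SO(3)$ is compact. Hence the maximum is attained, and $g_\star$ is a well-defined real number once $K$ is fixed. In addition, the value is independent of the chosen SVD $K=U\Sigma V^T$. The standard Procrustes argument (von Neumann's trace inequality, restricted to proper rotations) gives
\[
\max_{S\in SO(3)}\Tr(KS)=\sigma_1+\sigma_2+\operatorname{sgn}(\det K)\,\sigma_3 ,
\]
with the singular values ordered $\sigma_1\ge\sigma_2\ge\sigma_3$. The singular values are unique even when they coincide. So although $S_\star$ itself may fail to be unique when singular values are degenerate, the maximal value is unique. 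When $\det K=0$ the sign ambiguity is harmless, because then $\sigma_3=0$.

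\emph{Independence from the reference purifications.} Suppose the reference purification of $\sigma$ is changed to $(\widetilde B R,\,R^T\widetilde{\boldsymbol\delta})$ with $R\in SO(3)$. Then $K\mapsto KR$, and the substitution $S\mapsto R^TS$ is a bijection of $SO(3)$, so the maximum is unchanged. Now suppose the fixed purification of $\rho$ is changed to $(AR',R'^T\boldsymbol\gamma)$. Then $K\mapsto R'^TK$, and $\Tr(R'^TKS)=\Tr(K S R'^T)$. Since $S\mapsto SR'^T$ is again a bijection of $SO(3)$, the maximum is again unchanged. By the parametrization result of the section on the general form of purification variables, every admissible reference purification (under the determinant convention \eqref{eq:app_detA}) arises in this way. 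Hence $g_\star$ depends only on $\mathbf r$ and $\mathbf s$. This covers in particular the arbitrary choice of $O(\mathbf n)$ in the canonical gauge.

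\emph{Main obstacle.} The only delicate step is justifying that maximizing over one purification of $\sigma$ alone, with $\rho$'s purification fixed, loses nothing compared with the joint maximization in the definition of $g_\star(\rho,\sigma)$. I would settle this by the reparametrization just described. Jointly varying both purifications amounts to maximizing $\Tr(R'^TKRS)$ over $R',R,S\in SO(3)$. By the bijection argument this equals the single maximization over $S$.

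As a cross-check, I would also invoke Uhlmann's theorem, Eq.~\eqref{eq:uhlmann_theorem}. It identifies $g_\star$ with $\sqrt{F(\rho,\sigma)}$, which is manifestly a function of the states only. This gives uniqueness independently of the Procrustes machinery.
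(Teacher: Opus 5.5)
Your proof is correct. Its core step, that $\max_{S\in SO(3)}\Tr(KS)=\sigma_1+\sigma_2+\operatorname{sgn}(\det K)\,\sigma_3$ depends only on $K$ even when the maximizer $S_\star$ is degenerate, is the paper's entire proof; the paper treats $K$ as fixed and stops there.

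What you add is existence by compactness and, more substantively, invariance of the maximum under $K\mapsto R'^{T}KR$. This covers a change of reference purification of either state, including the choice of $O(\mathbf n)$ in the canonical gauge, together with the reduction of the joint maximization to a single one. It yields the stronger statement that $g_\star$ depends on $(\mathbf r,\mathbf s)$ alone. The paper does not prove this within the Procrustes framework and relies on Uhlmann's theorem for it.

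Your basis-free sign factor $\operatorname{sgn}(\det K)$ is also cleaner than the paper's version. The paper sets $Q=V^TSU$ and asserts $Q\in SO(3)$, but $\det Q=\det(V^TU)$ can equal $-1$. That is precisely why the factor $\operatorname{sgn}(\det(VU^T))$ appears in its final formula, and your formulation avoids that internal inconsistency.
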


\begin{proof}
Let \(K=U\Sigma V^T\) be a singular value decomposition, with
\(\Sigma=\diag(\sigma_1,\sigma_2,\sigma_3)\), \(\sigma_1\ge\sigma_2\ge\sigma_3\ge 0\).
For any \(S\in SO(3)\), set \(Q=V^T S U\), so that \(Q\in SO(3)\) and
\[
\Tr(KS)=\Tr(\Sigma Q).
\]
Hence
\[
\max_{S\in SO(3)}\Tr(KS)=\max_{Q\in SO(3)}\Tr(\Sigma Q),
\]
whose value is
\[
\sigma_1+\sigma_2+\operatorname{sgn}(\det(VU^T))\,\sigma_3.
\]
This number depends only on the singular values of \(K\), and is therefore unique.
It follows that \(g_\star^2\), and hence \(g_\star\), is uniquely determined.
\end{proof}

\section{optimal overlap for commuting qubit states}
\label{app:collinear}

Let
\[
\rho_a=\frac12\left(\I+a\,\hat{\mathbf n}\cdot\boldsymbol{\sigma}\right),
\qquad
\sigma_b=\frac12\left(\I+b\,\hat{\mathbf n}\cdot\boldsymbol{\sigma}\right),
\]
with \(a,b\in[-1,1]\). Then the optimal purification overlap is
\[
g_\star(\rho_a,\sigma_b)
=
\frac12
\left[
\sqrt{(1+a)(1+b)}
+
\sqrt{(1-a)(1-b)}
\right].
\]

Indeed, since \(\rho_a\) and \(\sigma_b\) are diagonal in the common eigenbasis

\(\{\ket{+_{\hat n}},\ket{-_{\hat n}}\}\), we write
\[
\rho_a
=
\lambda_+\ket{+_{\hat n}}\bra{+_{\hat n}}
+
\lambda_-\ket{-_{\hat n}}\bra{-_{\hat n}},
\]

\noindent and

\[
\sigma_b
=
\mu_+\ket{+_{\hat n}}\bra{+_{\hat n}}
+
\mu_-\ket{-_{\hat n}}\bra{-_{\hat n}},
\]

\noindent where

\[
\lambda_\pm=\frac{1\pm a}{2},
\qquad
\mu_\pm=\frac{1\pm b}{2}.
\]

Choose the purifications

\[
\ket{\Psi_\rho}
=
\sqrt{\lambda_+}\ket{+_{\hat n},+_{\hat n}}
+
\sqrt{\lambda_-}\ket{-_{\hat n},-_{\hat n}},
\]

\noindent and

\[
\ket{\Psi_\sigma}
=
\sqrt{\mu_+}\ket{+_{\hat n},+_{\hat n}}
+
\sqrt{\mu_-}\ket{-_{\hat n},-_{\hat n}}.
\]

Their overlap is

\[
\braket{\Psi_\rho}{\Psi_\sigma}
=
\sqrt{\lambda_+\mu_+}
+
\sqrt{\lambda_-\mu_-}.
\]

It remains to show that $|\braket{\Psi_\rho}{\Psi_\sigma}|$ is maximal. Any other purification of
\(\sigma_b\) can be written as~\cite{Jozsa1994,Nielsen2000,Bengtsson2017}

\[
(\I\otimes U)\ket{\Psi_\sigma},
\]

\noindent where \(U\) is unitary on the ancillary Hilbert space\cite{Jozsa1994,Nielsen2000,Bengtsson2017}. Therefore

\[
\braket{\Psi_\rho}{(\I\otimes U)\Psi_\sigma}
=
\sqrt{\lambda_+\mu_+}\,\bra{+_{\hat n}}U\ket{+_{\hat n}}
+
\sqrt{\lambda_-\mu_-}\,\bra{-_{\hat n}}U\ket{-_{\hat n}}.
\]

Taking the absolute value and using

\[
\left|\bra{\pm_{\hat n}}U\ket{\pm_{\hat n}}\right|\leq 1,
\]
\noindent we obtain

\[
\left|
\braket{\Psi_\rho}{(\I\otimes U)\Psi_\sigma}
\right|
\le
\sqrt{\lambda_+\mu_+}
+
\sqrt{\lambda_-\mu_-}.
\]

The upper bound is attained by choosing \(U=\I\). Hence the matched
Schmidt purifications already maximize the overlap.

Substituting the eigenvalues gives

\[
g_\star(\rho_a,\sigma_b)
=
\sqrt{\frac{1+a}{2}\frac{1+b}{2}}
+
\sqrt{\frac{1-a}{2}\frac{1-b}{2}},
\]

\noindent or equivalently

\[
g_\star(\rho_a,\sigma_b)
=
\frac12
\left[
\sqrt{(1+a)(1+b)}
+
\sqrt{(1-a)(1-b)}
\right].
\]

\section{Properties of the purification-based distance \(D_N\)}

Let \(\mathcal{D}(\mathcal{H})\) be the set of density operators acting on a
finite-dimensional Hilbert space \(\mathcal{H}\). For all \(\rho,\sigma,\tau\in\mathcal{D}(\mathcal{H})\), the distance \(D_N\)
satisfies the following properties~\cite{Lamberti2009,Osan2013}.

\paragraph{Normalization.}

\[
0\leq D_N(\rho,\sigma)\leq 1.
\]

\paragraph{Identity of indiscernibles.}

\[
D_N(\rho,\sigma)=0
\quad\Longleftrightarrow\quad
\rho=\sigma.
\]

\paragraph{Maximal distance for orthogonal supports.}

\[
D_N(\rho,\sigma)=1
\quad\Longleftrightarrow\quad
F(\rho,\sigma)=0.
\]

Equivalently, \(D_N(\rho,\sigma)=1\) if and only if the supports of
\(\rho\) and \(\sigma\) are orthogonal.

\paragraph{Symmetry.}

\[
D_N(\rho,\sigma)=D_N(\sigma,\rho).
\]

\paragraph{Triangle inequality.}

\[
D_N(\rho,\sigma)
\leq
D_N(\rho,\tau)+D_N(\tau,\sigma).
\]

Consequently, \(D_N\) is a metric on \(\mathcal{D}(\mathcal{H})\).

\paragraph{Unitary invariance.}

For every unitary operator \(U\),
\[
D_N(U\rho U^\dagger,U\sigma U^\dagger)
=
D_N(\rho,\sigma).
\]

\paragraph{Invariance under appending an identical ancillary state.}
For every ancillary density operator \(\omega\),

\[
D_N(\rho\otimes\omega,\sigma\otimes\omega)
=
D_N(\rho,\sigma).
\]

\paragraph{Contractivity under CPTP maps.}
For every completely positive trace-preserving map \(\Phi\),

\[
D_N(\Phi(\rho),\Phi(\sigma))
\leq
D_N(\rho,\sigma).
\]

In particular, for partial trace,

\[
D_N(\operatorname{Tr}_B\rho_{AB},\operatorname{Tr}_B\sigma_{AB})
\leq
D_N(\rho_{AB},\sigma_{AB}),
\]

and for a nonselective projective measurement \(\{P_i\}_i\),

\[
D_N\!\left(
\sum_i P_i\rho P_i,
\sum_i P_i\sigma P_i
\right)
\leq
D_N(\rho,\sigma).
\]

\paragraph{Pure-state reduction.}
If

\[
\rho=|\psi\rangle\langle\psi|,
\qquad
\sigma=|\phi\rangle\langle\phi|,
\]

\noindent then

\[
D_N^2(\rho,\sigma)
=
S\!\left(
\frac{
|\psi\rangle\langle\psi|
+
|\phi\rangle\langle\phi|
}{2}
\right),
\]

\noindent where \(S(\omega)=-\operatorname{Tr}(\omega\log_2\omega)\). Equivalently,

\[
D_N^2(\rho,\sigma)
=
\Phi_N(|\langle\psi|\phi\rangle|).
\]

For pure states, \(D_N^2\) coincides with the quantum Jensen--Shannon
divergence~\cite{Majtey2005}:

\[
D_N^2
\bigl(
|\psi\rangle\langle\psi|,
|\phi\rangle\langle\phi|
\bigr)
=
D_{\mathrm{JS}}
\bigl(
|\psi\rangle\langle\psi|,
|\phi\rangle\langle\phi|
\bigr).
\]


\end{document}